\def\editmode{0}
\def\reportmode{0}
\def\bibfilenames{WISENET}
\newcommand{\acom}[1]{\noindent\textcolor{red}{{[#1]}}} % author's comment
\newcommand{\cmt}[1]{\noindent\textcolor{lightgreen}{\underline{[#1]}}} % comment
\newcommand{\hc}[1]{\textcolor{blue}{#1}} % highlight command --> to
\newenvironment{myitemize}{\begin{itemize}}{\end{itemize}}
\newcommand{\myitem}{\item}
\newtheorem{theorem}{Theorem}
\newcommand{\timevarhindsight}{\tbm a_n^\circ}
\newcommand{\energybound}{B_y}
\newcommand{\rev}{}
\newcommand{\nReg}{\rev {\Omega}^{(n)}}
\newcommand{\strongcvxparf}{\beta_{\ell}}
\newcommand{\acom}[1]{\noindent\textcolor{red}{{[#1]}}} % author's comment
\newcommand{\nextver}[1]{\noindent\textcolor{orange}{{[#1]}}} % comment for next version
\renewcommand{\acom}[1]{\ignorespaces}
\renewcommand{\nextver}[1]{\ignorespaces}
\newcommand{\cmt}[1]{} % comment
\newcommand{\hc}[1]{\textcolor{black}{#1}} % highlight command -->
\newcommand{\energybound}{\hc{B}_y}
\newcommand{\nReg}{\rev {\Omega}^{(n)}}
\newcommand{\strongcvxparf}{\beta_{\ell}}
\newcommand{\timevarhindsight}{\tbm a_n^\circ}
\newenvironment{myitemize}{}{}
\newcommand{\myitem}{}
\newtheorem{theorem}{Theorem}
\newcommand{\rev}{}
\newcounter{rulecounter}
\newcommand{\resetrule}{ \setcounter{rulecounter}{0}}
\newsavebox{\selvestebox}
\newenvironment{colbox}[1]
  {\newcommand\colboxcolor{#1}%
   \begin{lrbox}{\selvestebox}%
   \begin{minipage}{\dimexpr\columnwidth-2\fboxsep\relax}}
  {\end{minipage}\end{lrbox}%
   \begin{center}
   \colorbox{\colboxcolor}{\usebox{\selvestebox}}
   \end{center}}
\definecolor{orange}{rgb}{1,0.8,0}
\definecolor{gray}{rgb}{.9,0.9,0.9}
\definecolor{darkgray}{rgb}{.3,0.3,0.3}
\definecolor{darkblue}{rgb}{.1,0.0,0.3}
\definecolor{lightblue}{rgb}{0.7,0.7,1}
\definecolor{lightred}{rgb}{1,0.7,.7}
\definecolor{purple}{RGB}{204,153,255}
\definecolor{lightgray}{rgb}{.95,0.95,0.95}
\definecolor{lightgreen}{rgb}{0.3,0.5,0.3}
\definecolor{darkgreen}{rgb}{0.05,0.3,0.05}
\newcommand{\tbm}[1]{{\tilde{\bm #1}}}
\newcommand{\hbm}[1]{{\hat{\bm #1}}}
 \newcommand{\define}{\triangleq}
\newtheorem{myproposition}{Proposition}
\newtheorem{myremark}{Remark}
\newtheorem{myproblemstatement}{Problem Statement}
\newtheorem{mylemma}{Lemma}
\newtheorem{mytheorem}{Theorem}
\newtheorem{mydefinition}{Definition}
\newtheorem{mycorollary}{Corollary}
\begin{document}

	%%%%%%%%%%%%%%%%%%%%%%%%%%%%%%%%%%%%%%%%%%%%%%%%%%%%%%%%%%%%%%%%%%%%%
	\title{Online Joint Topology Identification and Signal Estimation 
	from Streams with Missing Data }
	%%%%%%%%%%%%%%%%%%%%%%%%%%%%%%%%%%%%%%%%%%%%%%%%%%%%%%%%%%%%%%%%%%%%%
	\author{
    \IEEEauthorblockN{Bakht Zaman\IEEEauthorrefmark{2},~\IEEEmembership{Member,~IEEE}, Luis Miguel Lopez-Ramos\IEEEauthorrefmark{1}\IEEEauthorrefmark{2},~\IEEEmembership{Member,~IEEE,}\\ and Baltasar Beferull-Lozano\IEEEauthorrefmark{1}\IEEEauthorrefmark{2},~\IEEEmembership{Senior Member,~IEEE}\\
    \IEEEauthorblockA{\IEEEauthorrefmark{1}WISENET Center, Department of ICT, University of Agder, Grimstad, Norway}\\
    \IEEEauthorblockA{\IEEEauthorrefmark{2}	Simula Research Laboratory, Simula Metropolitan Center for Digital Engineering, Oslo, Norway}
}
	\thanks{The work in this paper was supported by the SFI Offshore Mechatronics grant 237896/E30, the PETROMAKS Smart-Rig grant 244205, the IKTPLUSS INDURB grant 270730/O70, and the IKTPLUSS DISCO grant 338740 from the Research Council of Norway.}
	\thanks{B. Beferull-Lozano is with the WISENET Center, Dept. of ICT, University of Agder, Jon Lilletunsvei 3, Grimstad, 4879 Norway. e-mail:baltasar.beferull@uia.no; and also with the SIGIPRO Department, Simula Metropolitan Center for Digital Engineering, e-mail: baltasar@simula.no. B. Zaman and L. M. Lopez-Ramos were with the WISENET Center at the time this work was completed. Now B. Zaman is with Simula Research Laboratory. E-mail: bakht@simula.no; and L. M. Lopez-Ramos is with the Holistic Systems Department, Simula Metropolitan Center for Digital Engineering, e-mail: luis@simula.no.
 
    This paper has supplementary downloadable material available at http://ieeexplore.ieee.org., provided by the author. The material includes several proofs and one figure. Contact bakht@simula.no for further questions about this work.
 }
 \vspace{-5mm}
}
	
	\maketitle
	
	\begin{abstract}
Identifying the topology underlying a set of time series is useful for tasks such as prediction, denoising, and data completion. Vector autoregressive (VAR) model-based topologies capture dependencies among time series and are often inferred from observed spatio-temporal data. When data are affected by noise and/or missing samples, topology identification and signal recovery (reconstruction) tasks must be performed jointly. Additional challenges arise when i) the underlying topology is time-varying, ii) data become available sequentially, and iii) no delay is tolerated. This study proposes an online algorithm to overcome these challenges in estimating VAR model-based topologies, having constant complexity per iteration, which makes it interesting for big-data scenarios. The inexact proximal online gradient descent framework is used to derive a performance guarantee for the proposed algorithm, in the form of a dynamic regret bound. Numerical tests are also presented, showing the ability of the proposed algorithm to track time-varying topologies with missing data in an online fashion.

\end{abstract}
\section{Introduction}\label{s:intro}
\cmt{Motivation Online Topology Id}%
In many applications involving complex systems, causal relations among time series are computed and encoded as a graph, where each node corresponds to a time series, and often reveals the topology of an underlying social, biological, or brain network~\cite{kolaczyck2009}. A causality graph provides insights into the complex system under analysis and enables certain tasks, such as forecasting~\cite{isufi2018forecasting}, signal
	reconstruction~\cite{lorenzo2016lms}, anomaly
	detection~\cite{liu2016unsupervised}, and
	dimensionality
	reduction~\cite{shen2017dimensionalityreduction}. 
	The assumption that the interaction patterns among variables remain unchanged does not always hold, and future data may have different underlying properties than the current and historic data. This situation, known as \emph{concept drift}~\cite{hoeltgebaum2021estimation}, invalidates methods that assume stationarity. Moreover, when data are not all available at once but sequentially, batch processing is not possible, calling for \emph{online} algorithms that continuously update model parameters after receiving each data sample.
	
	\cmt{Motivation of the missing values and noisy observations}It is impractical to assume that the data are fully observable at every node and time instant~\cite{little2014,pavez2019missingdata}, for diverse reasons. Data acquired by a sensor network may be partially observed due to faulty sensors, network congestion, or sporadic observation due to energy constraints. Data may only be partially available due to variable environmental factors in ecological networks 
 \cite{humbert2009better,clark2004population} or due to privacy reasons in social networks. 
 Missing values in econometrics time series are considered in \cite{harvey1984estimating}, where some of the monthly-sampled variables are missing for some months in an earlier period.
%%%%%%%%%%%%%%%%
General types of missing value patterns in the data and their estimations are presented in detail in \cite{little2019statistical}.
In spatio-temporal modeling, \cite{grover2021MissingDataSpatioTemporal} deals
with the problem of missing data and proposes an online
algorithm for estimating missing values. %In many real-world scenarios, for instance,
Practical applications of the estimation of missing data include \cite{adhikari2022missingdataIoT} in the context of IoT, 
\cite{kim2022missingvaluesship} 
for ship operational data in the maritime transportation domain, and \cite{zhang2022missingenvironmental} in the application of a real-time water quality monitoring system. Similarly, in an industrial environment, \cite{pan2022imputation} presents a deep learning-based algorithm to address missing data. \textcolor{black}{Finally, in power systems, the topology, as well as the state, are jointly estimated due to missing data in \cite{karimi2021joint}.}
%%%%%%%%%%%%%

This paper addresses the problem of estimating topologies from time series in an online fashion where the data contain missing values.
Regarding related work, we initially present methods for topology identification under complete data, and later we discuss those works under noisy and missing data.
%%%%%%%%%%
Identifying graphs capturing spatio-temporal
``interactions'' among time series has attracted significant 
attention in the literature \cite{mateos2018connecting}.
\begin{myitemize}%
\myitem\cmt{Memoryless interactions}%
\begin{myitemize}%
	\myitem\cmt{undirected graphs}%
	\begin{myitemize}%
		\myitem\cmt{refs}%
		\begin{myitemize}%
			\myitem\cmt{correlation}For undirected topologies, correlation and%
			\myitem\cmt{partial correlations} partial correlation graphs~\cite{kolaczyck2009},  Markov random fields~\cite{angelosante2011graphical}, and graph signal based approaches~\cite{segarra2017templates} are used.
		\end{myitemize}%
		\myitem\cmt{limitations}For directed graphs, 
	\end{myitemize}%
	\myitem\cmt{directed graphs}%
	\begin{myitemize}%
		\myitem\cmt{structural equation models (SEMs)}%
		structural equation models (SEM)~\cite{kline2015},  \cite{shen2017tensor}
		\myitem\cmt{Bayesian networks}or Bayesian networks~\cite[Sec.~8.1]{bishop2006} are mainly employed. 
	\end{myitemize}%
	\myitem\cmt{limitations}However, these methods only account
	 for \emph{memoryless} interactions, i.e., they cannot accommodate delayed causal interactions,
	where the value of a time series at a given time instant is
	related to the past values of other time series.
\end{myitemize}%
\par 
\myitem\cmt{Memory based approaches}%
\begin{myitemize}%
	\myitem\cmt{Granger causality notion}%
	\begin{myitemize}%
		\myitem\cmt{def}A notion of causality among time series is due to Granger~\cite{granger1988causality} based on the optimal prediction error,
		which is generally difficult
		to determine optimally \cite[p. 33]{zellner1979causality}, \cite{kay1}.
	\end{myitemize}%
	\myitem\cmt{VAR}%
	\begin{myitemize}%
		\myitem\cmt{VAR causality}Thus, alternative causality definitions
		based on, e.g., vector autoregressive (VAR) models are typically preferred~\cite{goebel2003varcausality,basu2015granger}.
		\myitem\cmt{Gaussian and stationary}VAR topologies are estimated
		assuming stationarity and a Gaussian distribution of the innovations  in~\cite{bach2004learning,songsiri2010selection}, and additionally
		\myitem\cmt{sparsity}%
		\begin{myitemize}%
			\myitem \cmt{Group-Lasso Estimator}assuming sparsity in~\rev{\cite{bolstad2011groupsparse,songsiri2013vargranger,mei2017causal}}.
		\end{myitemize}%
	\end{myitemize}%
\cmt{Literature on time-varying topology ID}%
\begin{myitemize}%
\myitem\cmt{motivation}All these approaches assume a model that does not
change over time. Time-varying topologies for undirected graphs include the approaches in ~\cite{kolar2010estimating,yamada2020timevarying} 
\rev{and for directed graphs in \cite{lopezramos2018dynamic}}.
Moreover, an algorithm to jointly estimate multiple graphs representing complex topological patterns is detailed in \cite{yuan2021joint}.
\end{myitemize}%

\cmt{Literature on Online topology est}%
\begin{myitemize}%
\par 
\myitem\cmt{motivation}
All the previously discussed approaches process the entire dataset at once and cannot deal with streaming data due to computational complexity.    
To tackle these issues, in \emph{online} optimization, an estimate is refined with
every new data instance.
\myitem	\cmt{Refs }%
\begin{myitemize}%
	\myitem\cmt{no memory}%
	\begin{myitemize}%
		\myitem \cmt{graphical + SEM}Existing online topology
		identification algorithms include \cite {hallac2017network,shen2017tensor,baingana2014trackingcascades,zaman2020dynamic,shafipour2019onlinetopology, zhang2021online} for memoryless interactions, topology identification in matrix-valued time series \cite{jiang2021online},
		\myitem \cmt{memory-based}and \cite{shen2018online} for nonlinear memory-based dependencies.
	\end{myitemize}%
	
\end{myitemize}%
\end{myitemize}%

%\subsubsection{Literature about Topology Identification with Missing Data}
Topology identification becomes challenging for noisy data. In \cite{liu2019smoothgraphlearning}, joint signal estimation and topology identification are pursued based on a spatio-temporal smoothness-based graph learning algorithm. The problem becomes even more challenging when data are incomplete. Several batch approaches to identify topologies in the presence of noisy data with missing values are available for undirected topologies in~\cite{berger2020efficient} and for VAR-based directed topologies in~\cite{rao2017estimation,loh2012missingdatavar}. In addition, \cite{coutino2021state} explores topology identification under partial observability of an input signal when an interaction model and its evolution over time are considered. For graph signals with missing values, distributed algorithms are presented in \cite{jiang2020recovery} to recover the signal from noisy observations without topology estimation.
 \myitem \cmt{online approaches}Online prediction of time series with missing data is considered in \cite{anava2015online} 
and \cite{yang2019online}, where the missing values are imputed by their estimates. Theoretical guarantees are presented; however, these works adopt a univariate autoregressive (AR) process model and thus do not extract information about the relations among multiple time series. Moreover, these works consider a static (stationary) model and analyze static regret.\footnote{The regret is an objective performance metric that allows comparing online algorithms (see Sec \ref{sec:composite} for details).} Joint estimation of signal and topology is considered in \cite{ioannidis2019semiblindinference} for a structural VAR model (SVARM) when the observations contain noisy and missing values.
However, no performance guarantees showing the tracking capabilities of the proposed online algorithm are presented.
\end{myitemize}%
\end{myitemize}%
%%%%%%%%%%%%%%%%
In \cite{zaman2019online}, an online algorithm for topology identification is proposed, where a recursive least squares (RLS)-based loss function helps to improve its tracking capabilities and enables derivation of a sub-linear regret bound. However, the latter algorithm cannot be applied directly when the data are corrupted by noise and missing values, \textcolor{black}{and the present paper proposes a methodology that allows to apply an RLS-based loss function under the aforementioned circumstances. A list of the main differences between the present paper and \cite{zaman2019online} follows:
\begin{itemize}
    \item The input data in this paper contain missing values and are corrupted by noise, implying that the input to the algorithm in \cite{zaman2019online} differs from that presented in this paper. 
    \item The resulting problem formulation in this paper is different, which can be confirmed by comparing (3) in \cite{zaman2019online} and \eqref{eq:jointbatchproblem} here. One major difference is in the sets of optimization variables, as \eqref{eq:jointbatchproblem} in this paper optimizes also over the reconstructed signal values, in addition to the VAR parameters. 
    \item %Note that equation 
    Decomposability across nodes [cf. (3) in \cite{zaman2019online}] results in a separate optimization problem for each node. However, the problem formulated here is not separable because of the coupling introduced by the optimization variables intended to estimate the missing signal values.
    \item Missing data makes the analysis different from the one in \cite{zaman2019online}. The bound in Theorem 1 here depends on the bound on the gradient derived in Lemma 1, which does not appear in \cite{zaman2019online}. Moreover, the bound in Corollary 1 also depends on the bound of the error associated with inexact gradients, derived in Lemma 3.
\end{itemize}
}

This paper proposes an online algorithm to estimate time-varying, memory-aware causality graphs from streaming time series that are affected by noise and missing data while reconstructing the input signals by denoising and imputation of missing values.  
The contributions are:
C1. The formulation of the online estimation and reconstruction task as a sequential decision problem, to account for the impact of decisions in the future stages. More specifically, a sequential cost function inspired from \cite{ioannidis2019semiblindinference} is put forth, involving signal mismatch from both the noisy samples and the current prediction, time-variation of the estimated topology parameter estimates, and a sparsity-promoting term. 
C2. The application of well-justified simplifying assumptions to the cost defined in C1 introduces a loss function that can be tackled using an online convex optimization approach. Based on this, an online algorithm is proposed, named \emph{joint signal and topology identification via recursive sparse online learning} (JSTIRSO), which has %an improved
tracking capability. The loss function that JSTIRSO optimizes is augmented with an additional term inspired by recursive least squares (RLS), which not only helps in tracking capability but also enables theoretical analysis. The proposed algorithm has fixed computational complexity per sample, which is suitable for big data applications. 
C3. The derivation of a dynamic regret bound, to characterize the performance of JSTIRSO when the topology is time-varying. The derived dynamic regret bound depends on the properties of the data, the error due to missing values, and the parameters of the algorithm. Moreover, the error of JSTIRSO in time-varying scenarios is quantified in terms of the data properties.
C4. Finally, the empirical validation of the performance of the proposed algorithms through numerical tests.
\par
The rest of the paper is organized as
follows: Sec. \ref{sec:model} presents the model and a batch formulation for tracking of VAR causality graphs. Sec. \ref{sec:online} introduces the sequential joint tracking and signal estimation and reviews the online convex optimization approach. To solve the sequential problem in an online fashion, an approximate loss function is obtained in Sec. \ref{sec:approximateLoss}, and an intermediate algorithm is derived. An alternative loss function is presented and used to derive the JSTIRSO in Sec. \ref{sec:tirso}, 
which in turn is characterized analytically (dynamic regret analysis) in Sec. \ref{sec:analysis}. Numerical results are presented in Sec. \ref{sec:simulationsE}, and Sec. \ref{sec:conclusions} concludes the paper. \\
\cmt{Notation}\textbf{Notation.} Bold lowercase (uppercase) letters
denote column vectors (matrices). Operators 
$\mathbb E[\cdot]$,
\rev{$\partial$},
$(\cdot)^\top $, 
$\mathrm{vec}(\cdot)$, 
and
$\lambda_{\mathrm{max}}(\cdot)$, 
%$\mathrm {diag}(\cdot)$ 
respectively denote 
expectation, 
\rev{sub-differential}, 
matrix transpose, 
vectorization, 
and
the maximum eigenvalue of a matrix. 
% diagonal
The operator $\nabla$ denotes a gradient and $\nabla^s$ represents a subgradient. Symbols $\bm 0_N$ and $\bm 0_{N\times N}$, represent all-zero vector and matrix, $\bm 1_N$ all-ones vector,  and $\bm I_{N}$ identity matrix, all of the given size. %respectively represent the  all-zero vector of size $N$, the all-ones vector of size $N$, the all-zero matrix of size $N\times N$, and the size-$N$ identity matrix. 
Finally, $[\cdot]_+ \triangleq \mathrm {max} (\cdot, 0)$, and $\mathds{1}$ is the indicator satisfying $\mathds 1 \{x\}=1$ if $x$ is true and
$\mathds 1 \{x\}=0$ otherwise.
%%%%%%%%%%%%%%%%%%%%%%%%%%%%%%%%%%%%%%%%%%%%%%%%%% Section %%%%%%%%%%%%%%%%%%%%%%%%%%%%%%%%%%%%%%%%%%%%%%%%%%%%
\section{Model and Problem Formulation}
\label{sec:model}
\label{sec:problem_formulation}
\begin{myitemize}
\myitem \cmt{Time series and causality notion:}	
\begin{myitemize}%
	\myitem \cmt{Time series}Consider a collection of $N$ time
	series, where $y_n[t]$, $t=0,1, \ldots , T-1$, denotes the
	value of the $n$-th time series at
	time $t$. 
	\myitem\cmt{Goal: directed causal
		graphs}A causality graph   $\mathcal
	G\triangleq(\mathcal V, \mathcal E)$ is a graph 
	where
	\begin{myitemize}%
		\myitem\cmt{vertices}the $n$-th vertex in $\mathcal V=\{
		1,\ldots,N\}$  is identified with the $n$-th time
		series $y_n[t]$ and
		\myitem\cmt{edges}there is an edge (or
		arc) from $n'$ to $n$ $((n,n')\in  \mathcal E)$ if and only if (iff)
		$y_{n'}[t]$ \emph{causes} $y_{n}[t]$ according to a certain
		causality notion.            \end{myitemize}% 				
\end{myitemize}%
\end{myitemize}%
\begin{myitemize}%
\myitem \cmt{Dynamic model: Time-varying VAR process}%
A prominent notion of
causality can be defined using VAR models. \textcolor{black}{Moreover, if the topology is dynamic, a time-varying VAR model can be defined. }
\begin{myitemize}%
	\myitem\cmt{def}To this end,
	consider the order-$P$ time-varying 
	VAR model \cite{lutkepohl2005}:
	\vspace{-2mm}
 \textcolor{black}{
	\begin{equation}\label{eq:model}
	\bm y[t] =\sum_{p=1}^{P}\mathbf A_p^{(t)}
	 \bm y[t-p] +\bm u[t],
	\end{equation}
 }
	where 
	\begin{myitemize}%
		\myitem$\bm y[t]\triangleq [y_1[t], \ldots, y_N[t] ]^\top$,
		\myitem$\bm A_p^{(t)}
		 \in \mathbb R
		^{N \times N}, p=1, \ldots, P$, are the matrices of
		time-varying 
		VAR parameters and
		\myitem$\bm u[t]\triangleq [u_1[t],\ldots,u_N[t]]^\top$ is the \emph{innovation process}, generally assumed
		to be a temporally white, zero-mean stochastic process, i.e.,
		$\mathbb E [\bm u[t]  ]=\bm 0_N$ and $\mathbb E [\bm
		u[t]\bm u^\top[\tau]  ]=\bm 0_{N\times N}$ for $t\ne \tau$.
        The parameters $\{\bm A_p^{(t)}\}_{p=1}^P$ follow a certain law of motion such as introduced in \cite[Ch. 18]{kilian2017}.
		
	\end{myitemize}%
	\myitem\cmt{LTI interpr}With $a_{n,n'}^{(p)(t)}$ the  $n,n'$-th entry of $\bm A_p^{(t)},
	$ %\footnote{For brevity purpose, we drop the $t$ for each element of $\bm A_p^{(t)}$}
	\eqref{eq:model} becomes
	\begin{align} \nonumber
	y_n[t]&=\sum_{n'=1}^{N}\sum_{p=1}^Pa_{n,n'}^{(p)(t)}y_{n'}[t-p]+u_n[t]
	\\ \label{eq:fnE}
	&= \sum_{n'\in \mathcal{N}(n)}\sum_{p=1}^P a_{n,n'}^{(p)(t)}y_{n'}[t-p]+ u_n[t],
	\end{align}
	for $n=1, \ldots, N$, 
	where
	$\mathcal{N}(n, t)~ \triangleq \{ n': \bm a_{n,n'}^{(t)}\neq \bm 0_P \}$ \nextver{is the \emph{in-neighborhood} of node $n$,} and 
	\begin{equation}\label{eq:def-an}
	\bm a_{n,n'}^{(t)}\define \left [a_{n,n'}^{(1)(t)},\ldots,a_{n,n'}^{(P)(t)}\right]^\top.
	\end{equation}

		\cmt{VAR causality} This model introduces the concept of \emph{VAR causality} \cite{geiger2015causalinferenceVAR}, with a similar spirit as of Granger causality, but less challenging to compute. Given a process order $P$, the time series $y_i[t]$ \emph{VAR-causes} time series $y_j[t]$ iff the $P$ most recent values of $y_i[t]$ carry information that reduces the prediction mean square error (MSE) of $y_j[t]$, see \cite{zaman2019online} for a detailed discussion. 	
	
\end{myitemize}

\myitem\cmt{VAR causality}

\begin{myitemize}%
	\myitem\cmt{intuition}When
	$\bm u[t]$ is a zero-mean and temporally
	white stochastic process,  the term
	$\hat
	y_n[t]\define \sum_{n'\in \mathcal{N}(n)}\sum_{p=1}^P
	a_{n,n'}^{(p)(t)}y_{n'}[t-p]$
	in \eqref{eq:fnE} is the \emph{minimum
		mean square error estimator} of $y_n[t]$
	given the previous values of all time
	series
	$\{y_{n'}[\tau],~n'=1,\ldots,N,~\tau <
	t\}$; see e.g.~\cite[Sec. 12.7]{kay1}. \textcolor{black}{ The set
	$\mathcal{N}(n,t)$ therefore collects
	the indices of those time series that
	participate in this optimal
	predictor of $y_n[t]$; in other words, the information
	provided by time series $y_{n'}[t]$
	with $n'\notin\mathcal{N}(n,t)$ is not
	informative to predict
	$y_n[t]$.  \myitem\cmt{def}This allows us to express the definition of 
	VAR causality in a clearer and more compact way:
	$y_{n'}[t]$ \emph{VAR-causes}
	$y_{n}[t]$ {around time instant $t$} whenever
	$n'\in \mathcal{N}(n,t)$. Equivalently,
	$y_{n'}[t]$ \emph{VAR-causes} $y_{n}[t]$ {around time instant $t$}
	 if $\bm a_{n,n'}^{(t)}\neq\bm 0_P$.
	\myitem\cmt{relation to Granger}%
	\myitem\cmt{graph}VAR causality relations among
	the $N$ time series can be represented  using a \textcolor{black}{time-varying} 
	causality graph where
	\begin{myitemize}%
		\myitem\cmt{edges}$\mathcal{E}(t)\triangleq \{(n,n'):~\bm
		a_{n,n'}^{(t)}\neq \bm 0_P\}$.\myitem\cmt{neighborhood} Clearly, in
		such a graph, $\mathcal{N}(n,t)$ is the
		in-neighborhood of node $n$. 
		\myitem\cmt{weights}To
		quantify the strength of the
		causality relations, a weighted graph
		can be constructed by assigning, e.g., the weight $\|\bm a_{n,n'}^{(t)}\|_2$
		to the edge~$(n,n')$.%
	\end{myitemize}%
 }
\end{myitemize} 
\par

\myitem\cmt{Batch problem and its nature}With these definitions,  
\end{myitemize}%
\cmt{Batch Problem without missing values}%More formally, the problem statement in batch form is:
the inference problem can be formally stated as:
\begin{myitemize}%
\myitem \cmt{Given}given the observations $\{\bm y[t]\}_{t=0}^{T-1}$ (in \emph{batch} form) and the VAR process order, $P$,
\myitem \cmt{Requested}find the time-varying VAR coefficients  $\{\{ \bm A_p^{(t)}\}_{p=1}^P\}_{t=P}^{T-1}$ such that it yields sparse topology at each time instant.
\myitem \cmt{Assumption on variations}Without assumptions on the variations of the topologies, the problem involves more unknown variables than the available data samples and is ill-posed. In this case, we assume that the variations in the topology are constrained so that the cumulative norm difference between consecutive sets of parameters does not exceed a given budget of $B$. 
\end{myitemize} 
\cmt{Batch Solution: Estimating time-varying coefficients without missing values}The formulation in \cite{bolstad2011groupsparse} can be extended to a time-varying model as follows:
%\vspace{-2mm}
\begin{subequations}
\label {eq:batchproblem}
\begin{align} 
%\{ \{  \mathbf{\hat A}_p^{(t)} \}_{p=1}^{P}\}_{t=P}^{T-1}=
&\underset{\{ \{ \bm A_p^{(\tau)} \}_{p=1}^{P}\}_{\tau=P}^{T-1}}{\arg \min} \frac{1}{\,2(T-P)}  \sum_{t=P}^{T-1}\left \lVert \bm y[t]-\sum_{p=1}^{P} \bm A_p^{(t)} \, \bm y[t-p]\right \rVert_2^2 \nonumber\\
&\quad \quad \quad \quad \quad \quad \quad +\sum_{t=P}^{T-1}\Omega\left(\left \{\bm A_p^{(t)}\right \}_{p=1}^{P}\right) \label{eq:batch_objective} 
\\
&\text{s. t.} \sum_{t=P+1}^{T-1} \left \lVert \mathrm {vec}\left (\{ \bm A_p^{(t)} \}_{p=1}^{P} \right ) - \mathrm {vec}\left (\{ \bm A_p^{(t-1)} \}_{p=1}^{P} \right ) \right \rVert_2^2 \leq B, \label{eq:batch_constraint}
\end{align}
\end{subequations}
%\vspace{-2mm}
where
\begin{myitemize}%
\myitem \cmt{Least squares loss function}the first term in the cost function is the least-squares loss, and 
\myitem \cmt{Sparsity promoting regularization function}the second term is a group sparsity-promoting regularization function defined as
\begin{equation}
\label{eq:omega_definition}
	\Omega\left(\left \{\bm A_p^{(t)}\right \}_{p=1}^{P}\right) 
	\define 
	\lambda \sum_{n=1}^{N}\sum_{n'=1}^{N} \mathds{1} \{ n' \neq n\}\big \lVert \bm a_{n,n'}^{(t)} \big \rVert_2,
\end{equation}
where $\bm a_{n,n'}^{(t)}$ has the same structure as \eqref{eq:def-an} with time-varying VAR parameters.
 The regularization function $\Omega$ promotes sparse edges in the causality graphs. 
\myitem \cmt{Parameter $\lambda$}The parameter $\lambda$ is a user-defined constant that controls the sparsity in the edges of the graph. 
\myitem \cmt{Budget on the path length}The constraint \eqref{eq:batch_constraint} 
restricts the amount of variation in the VAR parameters, and is necessary for the problem to have a meaningful solution (otherwise it would be ill-posed\nextver{avoid ill-conditioning}). 
\end{myitemize} 
\par 
\cmt{online problem statement with missing values}To formulate the problem of estimating the causality graphs when observations are affected by noise and some values are missing, consider a subset of $\mathcal V$ where the signal is observed, given by $\mathcal M_t \subseteq \mathcal V$.
\cmt{Modelling missing values with Bernoulli random variable}The (random) pattern of missing values is collected in the masking vector $\bm m[t] \in \mathbb{R}^N$
where $m_n[t], n=1,\ldots, N$, are i.i.d. Bernoulli random variables taking value 1 with probability $\rho$ and 0 with probability $1-\rho$. 
Let $\tbm y[t]$ be the observation obtained at time $t$, given by: 
\vspace{-2mm}
\begin{equation} \label{eq:observationmodel}
\tbm y[t]= \bm m[t] \odot ( \bm y[t] + \bm \epsilon[t]), 
%\vspace{-2mm}
\end{equation} 
where $\odot$ denotes element-wise product, and $\bm \epsilon[t]$ is the observation noise vector. 

\cmt{Batch problem with missing values}In a batch setting, the problem of estimating time-varying topologies with missing values is:
\begin{myitemize}
\myitem \cmt{Given}given the noisy observations $\{\tbm y[t]\}_{t=0}^{T-1}$ with missing values, and the VAR process order $P$,
\myitem \cmt{Requested}find the coefficients $\{\{ \hat {\mathbf{\bm A}}_p^{(t)} \}_{p=1}^{P}\}_{t=P}^{T-1}$ such that it yields a sparse topology.
\end{myitemize}%
\cmt{Estimating the signal is necessary for estimating the topology} Since the time series follow a VAR model, the topology can be estimated directly from the observation vector if the missing values are reconstructed (imputed), and the VAR parameters help in such reconstruction. 

\cmt{Jointly batch estimation problem}Thus, a natural approach is to jointly estimate the signals and the VAR coefficients. To this end, the approach advocated in \cite{ioannidis2019semiblindinference} is to solve the following problem: 
\begin{multline} \label {eq:jointbatchproblem}
\left \{   \hbm y[t], \left \{\mathbf{\hat A}_p^{(t)} \right \}_{p=1}^{P}\right \}_{t=P}^{T-1}   =\\
 \underset{\left\{\bm y[t], \{ \bm A_p^{(t)} \}_{p=1}^{P}\right\}_{t=P}^{T-1}}{\arg \min} \frac{1}{2}  \sum_{t=P}^{T-1}\left \lVert \bm y[t]-\sum_{p=1}^{P} \bm A_p^{(t)} \, \bm y[t-p]\right \rVert_2^2 \\
 + \sum_{t=P}^{T-1}\frac{\nu}{2 |\mathcal M_t|} \left \lVert \tbm y[t]-\bm m[t] \odot \bm y[t]\right \rVert_2^2 
+\sum_{t=P}^{T-1}\Omega\left(\left \{\bm A_p^{(t)}\right \}_{p=1}^{P}\right) \\
+ \beta \sum_{t=P}^{T-1} \sum_{p=1}^{P}\lVert \bm A_p^{(t)} -  \bm A_p^{(t-1)} \rVert_{\text F}^2,
\end{multline}
%constraint related to the path loss at time t
where
\begin{myitemize}%
\myitem the first term is a least-squares (LS) fitting error for all time instants (where the $t$-th term in the summation fits the signal based on the $P$ previous observations and the VAR coefficients at time $t$),
\myitem the second term penalizes the mismatch between the observation vector and the reconstructed signal (recall that $|\mathcal M_t|$ is the number of nodes where the signal is observed\footnote{For those time instants where $|\mathcal M_t| = 0$, the term affected by the fraction will not be considered in the optimization, so the division-by-zero error is avoided.}), 
\myitem the third term is a regularization function that promotes sparsity in the edges,
\myitem and the fourth term limits the variations in the coefficients (it comes from the dualization of the constraint in \eqref{eq:batchproblem}).
\myitem The parameter $\nu >0$ is a constant to control the trade-off between the prediction error based on the VAR coefficients and the mismatch between the measured samples and the signal reported after the reconstruction.
\myitem The parameter $\lambda$ controls the sparsity in the edges while $\beta$ controls the magnitude of the cumulative norm of the difference between consecutive coefficients.
\\
\myitem 
\textcolor{black}{\textbf{Remark 1}. The error in the prediction is due to two sources of uncertainty:
observation noise and innovation in the VAR process. The proposed weighted penalty accounts
for both sources of uncertainty, and allows to use both the input data samples and the estimated VAR parameters to provide robustness to noise. Tuning the hyperparameter $\nu$ allows us to find a balance point between trusting the (noisy) data
and matching a VAR process.
}

\myitem The resulting problem in \eqref {eq:jointbatchproblem} is (separately) convex in  $  \{\bm y[t]\}_{t=P}^{T-1} $ and in $\{\{ \bm A_p^{(t)} \}_{p=1}^{P}\}_{t=P}^{T-1}$, but not jointly convex.
\end{myitemize}%
\cmt{Solving the jointly batch estimation problem}A stationary point of \eqref {eq:jointbatchproblem} can be found via alternating minimization \cite[Corollary 1]{ioannidis2019semiblindinference}. Each subproblem in alternating minimization can be solved via proximal gradient descent. 
Next, we describe how to solve this problem in an online fashion for sequential data.
\section{Online Signal Reconstruction and Topology Inference} 
\label{sec:online}

The batch formulation in \eqref{eq:jointbatchproblem} uses information from all time instants to produce a sequence of reconstructed signal values and VAR parameter (topology) estimates. On the other hand, an online formulation should allow us to produce such a sequence with minimum delay and with fixed complexity (at the price of lower accuracy). Specifically, here we are interested in an algorithm that produces an estimate of $\bm y[t]$ and $\{ \bm A_p^{(t)} \}_{p=1}^{P}$ when the partial observation $\tbm y[t]$ is received. 

To this end, we design an online criterion such that its sum over time matches the batch objective in \eqref{eq:jointbatchproblem}. First, define
\newcommand{\dLoss}{\ell} 
\begin{multline}
\label{eq:def_dloss}
	\dLoss_t \left (
			\{ \bm y[\tau]\}_{\tau = t-P}^{t-1}, 
			\bm y[t], 
			\left \{\bm A_p^{(t)} \right \}_{p=1}^{P} 
	\right )
	\define \\
	\frac{1}{2}  \left \lVert \bm y[t]-\sum_{p=1}^{P}  \bm A_p^{(t)} \, \bm y[t-p]\right \rVert_2^2 
	+ \frac{\nu}{2 |\mathcal M_t|} \left \lVert \tbm y[t]-\bm m[t] \odot  \bm y [t]\right \rVert_2^2.
\end{multline}

Now we can use the expression above,\footnote{The splitting of the arguments of $\ell_t$ into the present and past samples will become useful in subsequent sections.} and the definition of $\Omega(\cdot)$ from \eqref{eq:omega_definition}, to define the dynamic cost function:
\begin{multline} \label {eq:dynamic_cost_1}
c_t \left (
	\{ \bm y[\tau]\}_{\tau = t-P}^{t},\left \{\bm A_p^{(t)}\right\}_{p=1}^{P}, \left \{\bm A_p^{(t-1)} \right \}_{p=1}^{P} 
\right ) 
\define \\
\dLoss_t \left (
	\{ \bm y[\tau]\}_{\tau = t-P}^{t-1}, 
			\bm y[t], 
			\left \{\bm A_p^{(t)} \right \}_{p=1}^{P} 
	\right )
+
\Omega\left(\left \{\bm A_p^{(t)}\right \}_{p=1}^{P}\right) 
\\+ \beta \sum_{t=P}^{T-1} \sum_{p=1}^{P}\lVert \bm A_p^{(t)} -  \bm A_p^{(t-1)} \rVert_{\text F}^2.
%\lambda \sum_{n=1}^{N}\sum_{n'=1}^{N} \mathds{1} \{ n' \neq n\}\big \lVert \bm a_{n,n'}^{(t)} \big \rVert_2,
\end{multline}
The objective function in \eqref{eq:jointbatchproblem} can be rewritten as $\sum_t c_t(\cdot,\cdot,\cdot)$. It becomes clear that producing an estimate of $\bm y[t]$ and $\{ \bm A_p^{(t)} \}_{p=1}^{P}$ does not only have an impact on $c_t(\cdot,\cdot,\cdot)$, but also on $\{c_\tau(\cdot,\cdot,\cdot)\}_{\tau=t}^{t+P}$. Such a coupling in time is taken into account in the framework of dynamic programming (or reinforcement learning), where the goal is to find a policy $\pi$: %of the form
\begin{alignat}{4}
\begin{aligned}
\pi \!& :\mathbb{R}^{PN} \times \mathbb{R}^{N^2 P} \times \mathbb{R}^N  \times \mathbb{R}^{N} 
& \rightarrow&&
& \!\!  \mathbb{R}^N \times \mathbb{R}^{N^2 P} 
\\
\pi \!&\left( \!
	\{ \hbm y[\tau]\}_{\tau = t-P}^{t-1},
	{  \{\hat {\bm A}_p^{(t-1)} \}_{p=1}^{P}}, 
	\tbm y[t],
	\bm m[t] \!
\right) \!\!\! &
 \rightsquigarrow &&
&\!\! \hat {\bm y}[t], \{\hat {\bm{A}}_p^{(t)} \}_{p=1}^{P}
\end{aligned}
\end{alignat}
such that the cumulative cost is minimized in expectation. Learning such a policy (via e.g., deep reinforcement learning) would be computationally intensive and require a high amount of data, and it is left out of the scope of the present paper. Instead, we propose to approximate such a policy using the much more tractable framework of online convex optimization (reviewed next). Fortunately enough, the structure of \eqref{eq:dynamic_cost_1} resembles that of the composite problems that can be efficiently dealt with via proximal online gradient descent (OGD). In the next section, an approximation of the cost function discussed above will be taken in a way such that we can derive a proximal OGD update over $ \{ {\bm A}_p^{(t-1)} \}_{p=1}^{P}$.

In the remainder of this section, the theoretical background of proximal OGD and inexact proximal OGD (\mbox{IP-OGD}) will be introduced. In Sec. \ref{sec:approximateLoss},
we will explain the approximations we take in order to be able to apply the \mbox{IP-OGD}  framework~\cite{dixit2019onlineproximal} to the online problem at hand.
\vspace{-3mm}
\subsection{Theoretical background: composite problems}
\label{sec:composite}
In the sequel, we present a framework to solve composite-objective optimization problems in an online fashion.

\begin{myitemize}
\myitem \cmt{Splitting the Objective} Consider a sequence of functions consisting of a loss and a regularization part. Each function in the sequence is given by:
\begin{equation} \label{eq:genproblem}
h_t(\bm a)\triangleq f_t(\bm a)+ \Omega_t (\bm a),
\end{equation}
where
\begin{myitemize}
	\myitem $f_t: \mathcal X\rightarrow \mathbb{R}$ is a general convex loss function, and
	\myitem $\Omega_t: \mathcal X\rightarrow \mathbb{R}$ is a convex regularization function,
	\myitem with $\mathcal X$ being a convex set. 
\end{myitemize}
Note that the function $\Omega_t(\cdot)$ can vary with time, however, in this work, it will remain constant.

Given such a sequence of functions, the online learning setting requests to generate, at each time $t$, a hypothesis or estimate $\bm a[t]$, given the previous functions $\{h_\tau\}_{\tau=0}^{t-1}$. The quality of the proposed estimate $\bm a[t]$ will be assessed by $h_t(\bm a[t])$. Since the estimate must be delivered before $h_t$ is made available, the possibility of generating good estimates is subject to certain assumptions on how much the sequence of optimal estimates (which is only known in hindsight) changes over time. In the context of this work, $\bm a[t]$ corresponds to the VAR parameters, and the online learning task corresponds to the tracking of the time-varying topologies, subject to the assumption that the topology changes slowly over time.

\myitem \cmt{Static regret}The performance metric usually considered in online learning algorithms for static problems is static regret, which compares the algorithm's performance with a constant hindsight solution. 
\myitem \cmt{static regret and time-varying solutions}Although online algorithms with sublinear regret \cite{shalev2011online}
can be applied in practice, the static regret is not an adequate metric for quantifying how well an algorithm infers time-varying models.  
To characterize the performance of online algorithms in time-varying scenarios, the dynamic regret (where the hindsight solution is time-varying) is given by \cite{zinkevich2003online}: 
\vspace{-2mm}
\begin{equation} \label{eq:defdynamicregret}
R_d[T] \triangleq \sum_{t=1}^{T}\big [ h_t(\bm a[t])- h_t(\bm a^*[t])\big  ],\vspace{-1mm}
\end{equation}
where
\begin{myitemize}%
	\myitem \cmt{estimate}$\bm a[t]$ is the estimate of the online algorithm 
	\myitem \cmt{optimal solution}and $\bm a^*[t]$ is the optimal solution\footnote{For simplicity of exposition, $h_t(\cdot)$ is usually assumed to have a unique minimizer, which is verified by the loss function presented in Sec. \ref{sec:tirso}.} at time $t$, given by $\bm a^*[t]\triangleq \arg \min _{\bm a} h_t(\bm a)$. %Note that optimal solutions are time-varying. 
\end{myitemize}
Next, we present an online algorithm to solve the composite problem in \eqref{eq:genproblem}. 
\end{myitemize}%
\cmt{Online Proximal Gradient algorithm}%
\begin{myitemize}%
\myitem \cmt{Introduction to proximal algorithms}Composite problems can be efficiently solved via proximal methods \cite{parikh2014proximal,beck2017}, which exploits the proximity operator.
\begin{myitemize}%
	\myitem \cmt{Prox operator}The proximity (prox) operator of a scaled function $\eta\Psi$ at point $\bm v$ is defined by \cite {parikh2014proximal}:
	\begin{equation} \label {eq:defproximaloperator} 
	\textbf{prox}_{\Psi}^\eta(\bm v)\triangleq \underset{\bm x \in \text{dom }\Psi}{\arg\min}\left [\Psi(\bm x)+\frac{1}{2\eta}\left \lVert \bm x-\bm v\right \rVert_2^2\right],
	\end{equation}
	where
	\begin{myitemize}%
		\myitem \cmt{proximal term}$\Psi(\cdot)$ is minimized together with a quadratic proximal term, making the objective strongly convex.
		%\myitem \cmt{strongly convexity}The minimization objective inside \eqref{eq:defproximaloperator} becomes strongly convex due to the quadratic proximal term.				
		\myitem \cmt{Interpretations}The prox operator of a function at point $\bm v$ can be interpreted as minimizing the function while being close to $\bm v$, and 
		\myitem \cmt{regularization parameter}the parameter $\eta$ controls the trade-off between the two objectives. 
	\end{myitemize}
	\myitem \cmt{why proximal methods and different proximal algorithms}
	Proximal algorithms are used to solve composite problems, and they exhibit good convergence guarantees. 
	
	\myitem \cmt{Proximal gradient algorithm}An algorithm for solving composite problems is proximal gradient descent (PGD) \cite{parikh2014proximal}. 
	Until convergence, at each iteration, a gradient descent step is performed on the differentiable component of the objective and then the prox operator of the non-differentiable function at the resultant vector is performed. 
	\myitem \cmt{Online Proximal gradient algorithm}In its online version, namely proximal OGD, only one iteration of the proximal gradient is performed at each time instant based on the available data sample, instead of running until convergence.
	\myitem \cmt{Inexact online proximal gradient descent}In cases where the full information about the cost function is not available, \mbox{IP-OGD} \cite{dixit2019onlineproximal} assumes that an inexact gradient is available and the analysis of the algorithm includes the error between the true gradient and the available inexact gradient. The \mbox{IP-OGD} algorithm enjoys solid performance guarantees for tracking time-varying parameters.
\end{myitemize} 
\end{myitemize} 

\section{Deriving an approximate loss function}\label{sec:approximateLoss}

The expressions in the previous section [cf. \eqref{eq:dynamic_cost_1}] represent the problem of joint estimation and reconstruction from a rather ideal point of view because, even though the optimal policy would allow the best possible tracking, finding such a policy is nearly intractable. Fortunately, adding a few simple assumptions can give rise to a composite objective problem that can be solved using the approach described in Sec. \ref{sec:composite}.

Notice that at time $t$, considering the underlying observation noise and random missing data, the previous $P$ reconstructed samples, $\{ \hbm y[\tau]\}_{\tau = t-P}^{t-1}$, can be considered as realizations of random variables. This allows replacing the deterministic cost function $c_t(\cdot)$ for the batch formulation by the following \emph{random} cost function for the online problem:
\vspace{-3mm}
\begin{multline}
\label{eq:function_c}
C_t\left(
 \bm y[t],\left \{\bm A_p^{(t)} \right \}_{p=1}^{P}
 \right)
 =\\
 \dLoss_t \left (\{ \hbm y[\tau]\}_{\tau = t-P}^{t-1}, \bm y[t], \left \{\bm A_p^{(t)} \right \}_{p=1}^{P} \right )
+
\Omega\left(\left \{\bm A_p^{(t)}\right \}_{p=1}^{P}\right) 
\\
+ \beta  \sum_{p=1}^{P}\lVert \bm A_p^{(t)} -  \hbm A_p^{(t-1)} \rVert_{\text F}^2
,
\end{multline}
which is jointly convex in its arguments, and where $\{ \hbm A_p^{(t-1)} \}_{p=1}^{P}$ have been previously estimated at time $t-1$.
%$\bm y[t]$ and $\bm A$. 
Notice that, if $\{ \hbm y[\tau]\}_{\tau = t-P}^{t-1}$ and $\tbm y[t]$ were equal to the true (yet unobservable) signals $\{ \bm y[\tau]\}_{\tau = t-P}^{t}$, this setting would be the same that is dealt with in \cite{zaman2019online}, by direct application of proximal OGD. \textcolor{black}{The setting here is more challenging because it involves a joint minimization over the estimated signals and the VAR model parameters.} Moreover, since the aforementioned signal estimates are inexact versions of the true signals, in the present work we will use the \mbox{IP-OGD} framework discussed in \cite{dixit2019onlineproximal} to analyze the regret of the resulting algorithm. Before proceeding to the formulation of the online algorithm, two remarks are in order. \\
{\bf Remark 2:} The cost function takes the signal estimate and the VAR parameters. It is assumed that the VAR parameters change smoothly with time, but we cannot assume that the signals vary smoothly with time. Recall that in each proximal OGD iteration, a minimization is solved involving a first-order approximation of the loss $\dLoss_t$, the regularizer $\Omega$ (not linearized), and a proximal term that ensures that the variable estimated at time $t$ is close in norm to the estimate at time $t-1$. This proximal term should involve $\{\bm A_p^{(t)}  \}_{p=1}^{P}$, but not $\bm y[t]$.
\\
%{\color{blue}
{\bf Remark 3:} As a consequence of the variable decoupling introduced in \eqref{eq:function_c}, 
$C_t(\cdot)$ becomes separable across nodes.
\textcolor{black}{Thanks to this decomposability, the proposed algorithms can process the inputs of each node separately, but it does not completely remove the coupling among nodes in the online estimation process, as the inferred topology is used to reconstruct signals using the neighboring nodes.}
%}

%As a consequence of the approximations above, 
\textcolor{black}{The remainder of this section discusses the reformulation of the joint optimization over $\{\bm A_p^{(t)}  \}_{p=1}^{P}$ and $\bm y[t]$ into an optimization only over $\{\bm A_p^{(t)}  \}_{p=1}^{P}$. Upon application of IP-OGD, this will yield a proximal step involving only $\{\bm A_p^{(t)}  \}_{p=1}^{P}$, but the associated gradient is calculated differently to implicitly solve over $\bm y[t]$.
The aforementioned reformulation can be done as follows. 
}Note first that the joint minimization can be split into first minimizing over $\bm y[t]$ and then over $\{\bm A_p^{(t)}  \}_{p=1}^{P}$. The first minimization admits a closed form, which is convex in $\{\bm A_p^{(t)}  \}_{p=1}^{P}$. %Since $C_t$ is jointly convex in both of its arguments,  
Specifically, we can write:
\newcommand{\rLoss}{\mathcal{L}}
\vspace{-1mm}
\begin{equation}
\min_{\! \bm y[t],\!\left \{\bm A_p^{(t)}\right \}_{p=1}^{P}}\!\!\!\!\! C_t\left(
 \bm y[t],\left \{\bm A_p^{(t)} \right \}_{p=1}^{P}
 \right)
  \!\!= \!\!\!\! \min_{ \left \{\bm A_p^{(t)}\right\}_{p=1}^{P}} \!\!\!\!\!
  \rLoss_t\left(\left \{\bm A_p^{(t)} \right \}_{p=1}^{P}\right), 
\end{equation}
\vspace{-1mm}
where
\begin{equation} \label{eq:randomloss}
\rLoss_t\left(\left \{\bm A_p^{(t)} \right \}_{p=1}^{P}\!\right) 
\!\!\define 
\min_{\bm y[t]} \dLoss_t \! \left( 
\{ \hbm y[\tau]\}_{\tau = t-P}^{t-1}, \bm y[t], \left \{\bm A_p^{(t)} \right \}_{p=1}^{P}
\right).
\end{equation}
\textcolor{black}{The main difficulty at this point is that the loss function in \eqref{eq:randomloss} is defined as the output of a minimization operator, and in order to apply IP-OGD, one needs its gradient in closed form. Fortunately, it is possible to derive the analytical minimization of \eqref{eq:randomloss}, which is shown in Sec. \ref{ss:analytical_reconstruct}}. Once a closed form is available for $\rLoss_t$, \mbox{IP-OGD} can be applied. %; its theoretical background is described in Sec. \ref{sec:composite}.
The inexactness comes from the previously estimated (reconstructed) $\{ \hbm y[\tau]\}_{\tau = t-P}^{t-1}$. Recall that we model such estimates as random variables from the point of view of the agent that estimates $ \{\bm A_p^{(t)} \}_{p=1}^{P}$ at time $t$. That is what makes $\rLoss_t$ a random function, more specifically an inexact version of the ``true" loss function, which would be given by
\vspace{-1mm}
\begin{equation}
\label{eq:trueloss}
\rLoss^{\text{ true}}_t\left({\text{$\{\bm A_p^{(t)} \}_{p=1}^{P}$}}\right)\! \define \! \min_{\bm y[t]} \dLoss_t \! \left( 
\{ \bm y[\tau]\}_{\tau = t-P}^{t-1}, \bm y[t],  \{\bm A_p^{(t)}  \}_{p=1}^{P}
\right),
\nonumber
\end{equation}
but is unavailable because the true signal values $\{ \bm y[\tau]\}_{\tau = t-P}^{t-1}$ would be needed to evaluate it.
\\
Note that the loss function in \eqref{eq:randomloss} is separable across nodes:
\vspace{-2mm}
\begin{multline} \label{eq:separableloss}
	\rLoss_t\left(\left \{\bm A_p^{(t)} \right \}_{p=1}^{P}\right) 
	= \sum_{n=1}^N \rLoss_t^{(n)}(\bm a_n[t])\\
	=	\sum_{n=1}^N \min_{y_n[t]}  \ell_t^{(n)}(\hbm g[t],  y_n[t],\bm a_n[t]) ,
\end{multline}
\vspace{-2mm}
where
\vspace{-1mm}
\begin{multline}
	\ell_t^{(n)}(\hbm g[t], y_n[t],\bm a_n[t])\define \\
	 \frac{1}{2}\left(
	(y_n[t] - \hbm g[t]^\top \bm a_n[t])^2 + \frac{\nu \, m_n[t]}{|\mathcal M_t|} (y_n[t] - \tilde y_n[t])^2
	\right),
 \vspace{-1mm}
\end{multline}
%with
\begin{equation}
		\hbm g[t] \define  \textrm{vec}\Big (\big [\hbm y[t-1], \ldots, \hbm y [t -P] \big ]^\top \Big ), \label{eq:defOfgHat1}
\end{equation}
%and 
\begin{equation} \label{eq:lossJSTISO}
	\rLoss_t^{(n)}(\bm a_n[t])
	\define 
	\min_{y_n[t]} \ell_t^{(n)}(\hbm g[t],  y_n[t],\bm a_n[t]).
\end{equation}
To arrive at the loss function, the minimizer (signal reconstruction) will be derived; then, a closed-form expression for $\rLoss_t^{(n)}$ will be obtained.
\vspace{-2mm}
\subsection{Signal reconstruction and loss function in closed form} \label {ss:analytical_reconstruct}
\cmt{Signal estimation problem} We discuss here the (sub)problem of estimating the signal from a noisy observation vector with missing values given a (fixed) topology. The resulting estimator is a convex combination of the signal prediction via the VAR process and the values present in the observation vector.  
%\begin{myitemize} 
%	\myitem \cmt{Option 1: only estimate $\bm y[t]$} 
\begin{myitemize}%
	\myitem \cmt{Online problem of signal estimation}More formally, the reconstruction subproblem consists in estimating $\bm y[t]$ given $\tbm y[t]$, $\bm m[t]$,  $\hbm g[t]$,  %$\{\hbm y[t-p]\}_{p=1}^{P}$
and 
%the estimated VAR coefficients %estimated at the previous time-instant 
$\{ \mathbf{  A}_p^{(t)}\}_{p=1}^P$. Notice from \eqref{eq:def_dloss} that $\tbm y[t]$ and $\bm m[t]$ are implicit in the definition of $\ell_t(\cdot)$:  
\vspace{-2mm}
	\begin{equation} \label{eq:onlinesignalestimationproblem1}
	%\hbm y[t] = \underset{  \bm y[t] }{\arg \min} \frac{1}{2}  \left \lVert \bm y[t]-\sum_{p=1}^{P} \bm A_p^{(t)} \, \hbm y[t -p]\right \rVert_2^2 + \frac{\nu}{2 |\mathcal M_t|} \left \lVert \tbm y[t]-\bm M_{t} \bm y[t]\right \rVert_2^2,
	\hbm y[t] = %\underset{  \bm y[t] }{\arg \min} 
	\arg \min_{\bm y[t]} \ell_t (\hbm g[t], \bm y[t],\bm a_n[t]).
	\end{equation}
The solution for the $n$-th entry of $\hbm y[t] $ is $\hat y_n[t] = \arg \min_{y_n[t]} \ell_t^{(n)}(\hbm g[t], y_n[t],\bm a_n[t])$, which has a closed form given by
\vspace{-2mm}
\begin{equation}\label{eq:nthsolutionsignalestimation}
	\hat y_n[t] = \left(1-U_n[t]\right) \hbm g[t]^\top \bm a_n[t] + U_n[t] \tilde y_n[t],
	%\hat y_n[t] = \frac{|\mathcal M_t|}{|\mathcal M_t|+\nu M_{nn}[t]} \hbm g[t]^\top \bm a_n[t] + \frac{\nu M_{nn}[t]}{|\mathcal M_t|+\nu M_{nn}[t]} \tilde y_n[t],
\end{equation} 
where 
\vspace{-2mm}
\begin{equation}
\label{eq:u}
	U_n[t] \define \frac{\nu}{|\mathcal M_t|+\nu }  m_n[t].
\end{equation}
Observe that $U_n[t]\in [0, \nu/(1+\nu)]$ holds $\forall t, n$.

\end{myitemize}
%\subsection{Loss function in closed form}
After substituting %the closed-form solution of $\hat y_n[t]$ from 
\eqref{eq:nthsolutionsignalestimation} into \eqref{eq:lossJSTISO} and simplifying, the loss function can be expressed as
\vspace{-2mm}
\begin{equation}
	\rLoss_t^{(n)}(\bm a_n[t]) = \frac{1}{2}U_n[t]
 (\tilde y_n[t] -\hbm g^\top[t] \bm a_n[t])^2,
\end{equation}
and it will be used in Sec. \ref{ss:derive_jstiso} to derive the \mbox{IP-OGD} iterates. Proximal OGD involves linearizing part of the objective, here $\rLoss_t^{(n)}(\cdot)$, which requires the gradient. The latter % gradient of $\rLoss_t^{(n)}$ w.r.t. $\bm a_n[t]$ 
is given by 
\begin{eqnarray}\label{eq:gradfTISO}
	%\hbm v_n[t]\define 	\hspace{-0.5mm} 
	\nabla%_{\bm a_n[t]}
	\rLoss_t^{(n)}(\bm a_n[t]) \hspace{-0.5mm} = \hspace{-0.5mm}  U_n[t] \left(\hbm g[t] \hbm g^\top [t] \bm a_n[t] -\tilde y_n[t]  \hbm g[t]\right).
\end{eqnarray}
%%%%%%
\vspace{-5mm}
\subsection{Application of \mbox{IP-OGD} to Joint Signal and Topology Estimation} \label{ss:derive_jstiso}
\begin{myitemize}
\myitem \cmt{Application of Inexact Online Proximal Gradient Descent to the problem}%We apply inexact online proximal gradient descent (OPGD) to the problem under consideration. 
The gradient defined in \eqref{eq:gradfTISO} %that $\hbm v_n[t]$ 
depends on $\hbm g[t]$, which is conformed using the estimates $\{\hbm y[t-p]\}_{p = 1}^{P}$, which in turn will generally differ from the true signals. This is translated into an error in the gradient and this is why IP-OGD is advocated here.
\par  
\myitem \cmt{Deriving the algorithm}
\cmt{JSTIRSO as OPGD}Let $f_t^{(n)}$ be a general loss function\nextver{inaccessible}, and let $\mathcal{F}_t^{(n)}$ be a random function that is an inexact version of $f_t^{(n)}$. Using $\mathcal{F}_t^{(n)}$ and $\Omega^{(n)}(\bm a_n)$ in \eqref{eq:genproblem}, with a constant step size $\alpha$, the IP-OGD iteration is:
\begin{equation}
\bm a_n[t] = \textbf{prox}_{\Omega^{(n)}}^{\alpha} \left ( \bm a_n[t-1] - \alpha \nabla \mathcal{F}_t^{(n)}(\bm a_n[t-1])   \right ).
\end{equation}

Let $\bm a_n^{\text{f}} [t] \triangleq  \bm a_n[t-1] - \alpha  \nabla \mathcal{F}_t^{(n)}(\bm a_n[t-1])$,  and 
\begin{equation} \label{eq:af_splitting}
\bm a^\text{f}_{n}[t]= [(\bm a^\text{f}_{n,1}[t])^\top, \ldots, (\bm a^\text{f}_{n,N}[t])^\top ]^\top,
\end{equation}
which enables us to write the above update expression as
\begin{align*}
\bm a_n[t]& =  \textbf{prox}_{ \Omega^{(n)}}^{\alpha} \left ( \bm a_n^{\text{f}} [t]   \right )\\
& = \underset{\bm z_n }{\arg\min} \left ( \Omega ^{(n)} (\bm z_n) + \frac{1}{2\alpha} \left \lVert \bm z_n - \bm a_n^{\text{f}} [t] \right \rVert_2^2 \right ).
\end{align*}
Using the regularizing function $\Omega^{(n)}(\bm a_n) \define \lambda \sum_{n'=1}^{N} \mathds 1 \{ n \neq n'\}  \left\lVert \bm a_{n,n'}  \right\rVert_2 
%=\lambda \sum_{n'=1}^{N} \mathds{1} \{ n' \neq n\}\big \lVert \bm a_{n,n'}^{(t)} \big \rVert_2
$, [cf. \eqref{eq:omega_definition} that $\Omega = \sum_{n=1}^{N} \Omega^{(n)}$], the update yields
 \begin{align*}
\bm a_n[t]& = \underset{\{\bm z_{n,n'}\}_{n'=1}^N  }{\arg\min} \Bigg ( \lambda \sum_{n'=1}^{N} \mathds 1 \{ n \neq n'\}  \left\lVert \bm z_{n,n'}  \right\rVert_2\\
& \quad + \frac{1}{2\alpha} \sum_{n'=1}^{N} \left \lVert \bm z_{n,n'} - \bm a_{n,n'}^{\text{f}} [t] \right \rVert_2^2 \Bigg ).
\end{align*}
which is separable across $n^{\prime}$ and the solution to the $n'$-th subproblem is given by the group soft-thresholding:
\vspace{-2mm}
\begin{align}
\bm a_{n,n'}[t] &= \underset{\bm z_{n,n'} }{\arg\min} \Bigg [ \mathds 1 \{ n \neq n'\}  \left\lVert \bm z_{n,n'}  \right \rVert_2 \nonumber \\ &\quad \quad \quad \quad \quad + \frac{1}{2\alpha \lambda} \left \lVert \bm z_{n,n'} - \bm a_{n,n'}^{\text{f}} [t] \right \rVert_2^2 \Bigg] \nonumber \\
&= \bm a^\text{f}_{n,n'}[t ]\left [1-\frac{\alpha \lambda  \mathds 1 \{ n \neq n'\}}{\left \lVert \bm a^\text{f}_{n,n'}[t ] \right\rVert_2}\right]_+, \label {eq:updateproxonline}
\end{align}
(recall that $\bm a^\text{f}_{n,n'}[t]$ is a subvector of $\bm a^\text{f}_{n}[t]$ as defined in \eqref{eq:af_splitting}).
\myitem \cmt{Online algorithm for tracking time-varying topologies}The algorithm JSTISO, which is intended at minimizing $C_t(\cdot, \cdot)$ in \eqref{eq:function_c}, is obtained when $\mathcal{F}_t^{(n)}$ is set to be $\rLoss_t^{(n)}$. All required steps are summarized in Procedure \ref{alg:JSTISO}. It only requires $\mathcal{O}(N^2 P)$ memory entries to store the previous $P$ reconstructed samples, and each update requires $\mathcal{O}(N^2 P)$ arithmetic operations, which is in the same order as the number of parameters to be estimated.

{\bf Remark 4:} For those time instants and nodes where an entry is missing, $\nabla_{\bm a_n[t]}\rLoss_t^{(n)} = \bm 0_{NP}$, but \textbf{Procedure \ref{alg:JSTISO}} applies the soft-thresholding operator \eqref{eq:updateproxonline} %[cf. Step 9] 
to the corresponding coefficients. While this may seem counter-intuitive, the shrinking is justified by the model at hand. The time-varying parameters are modeled as a random walk whose innovations are compound by a) a Gaussian distributed term, plus b) a term that attracts the VAR parameters towards $\bm 0$ %(which is the source of sparsity). 
for sparsity.
The term a) justifies the Frobenius norm in \eqref{eq:jointbatchproblem} and the term b) justifies the presence of $\Omega$ in the same equation.

\textcolor{black}{{\bf Remark 5:} \cmt{It should be possible to recover TISO} In addition to the different input to the method, the proposed algorithm in \textbf{Procedure 1} differs from TISO \cite{zaman2019online} in lines 5, 6, and 12. Steps 5 and 6 correspond to the computation of the gradient while step 12 corresponds to the estimation of the signal.
}

\begin{algorithm}
\caption{Tracking time-varying topologies with missing data via JSTISO}
\textbf{Input parameters:} $P, \lambda, \alpha, \nu$ \\
%\textbf{Output:} $\{  {\bm a}_n[t]\}_{n=1}^N$ \\
\textbf{Initialization:} $   \{\hbm y[\tau]\}_{\tau =0}^{P-1}, \{{\bm a}_n[P-1]%\bm 0_{NP}
\}_{n=1}^{N} $ 
\begin{algorithmic}[1] % The number tells where the line numbering should start
	\For {$t=P,P+1, \ldots $}
	\State {Receive observation $\tbm y[t]$ and masking vector $\bm m[t]$} 
	%\If{$|\mathcal M_t|\ne 0$} \acom{remove}
	\State {Obtain $\hbm g[t] $ from $\{\hbm y[t-p]\}_{p=1}^P$  via \eqref{eq:defOfgHat1} }%{eq:g} %\Comment{Compute the input of the VAR process at time $t$} %\Comment {update the autocorrelation matrix}
	\For {$n=1, \ldots,N $} %\Comment {loop for nodes} 
	%\State{Obtain $U_n[t]$ via \eqref{eq:u}% and $\hbm v_n[t]$ via \eqref{eq:gradfTISO}}
    \State{\textcolor{black}{$U_n[t] = \frac{\nu}{|\mathcal M_t|+\nu }  m_n[t]$}}
	\textcolor{black}{\State $\hbm v_n[t]=U_n[t] \left(\hbm g[t] \hbm g^\top [t] \bm a_n[t-1] -\tilde y_n[t]  \hbm g[t]\right)$ }%\Comment{Computing gradient}
	\State $\bm a^\text{f}_{n} [t]= \bm a_n[t-1] - \alpha \textcolor{black}{\hbm v_n[t]}$%\rLoss_t^{(n)}(\bm a_n[t-1])$ [cf. \eqref{eq:gradfTISO}]
	\For {$n'=1,2, \ldots ,N $ } %\Comment {loop for groups}
	% IF {$n\neq n'$} \State {Compute $\hat {\bm a}_{n,n'}[t+1]$  via \eqref{eq:comidsol1}}
	% \ELSE \State{Compute $\hat {\bm a}_{n,n'}[t+1]$ via \eqref{eq:comidsol2} }
	% \ENDIF 
	%\State  $\bm a^\text{f}_{n,n'}[t] = \bm a_{n,n'}[t-1]- \alpha \hbm v_{n,n'}[t]$
	\State{Compute ${\bm a}_{n,n'}[t]$ via \eqref{eq:updateproxonline}
	%\State {${\bm a}_{n,n'}[t+1]=\bm a^\text{f}_{n,n'}[t ]\left [1-\frac{\alpha \lambda ~ \mathds 1 \{ n \neq n'\}}{\left \lVert \bm a^\text{f}_{n,n'}[t ] \right\rVert_2}\right]_+$
		%Compute $ {\bm a}_{n,n'}[t+1]$ via \eqref {eq:onlineblockproximalupdate} 
	}
	\EndFor 
	\State \textbf{end for}
	\State {$ \bm a_n[t]=\left [{\bm a}_{n,1}^\top [t], \ldots, {\bm a}_{n,N}^\top [t]\right ]^ \top$ }
	%\State {Compute  $\hat y_n[t]$ %using $\tilde y_n[t]$ 
	%via \eqref{eq:nthsolutionsignalestimation}} 
    \State \textcolor{black}{$\hat y_n[t] = \left(1-U_n[t]\right) \hbm g[t]^\top \bm a_n[t] + U_n[t] \tilde y_n[t]$}
	\EndFor
	\State \textbf{end for}
	\State {Output $\left \{  {\bm a}_n[t]\right  \}_{n=1}^N, \hbm y[t]$}
	%\Else \acom{remove}
	%\State{Assign the previous parameters to the current parameters} \acom{remove}
	%\EndIf
	\EndFor	
	\State \textbf{end for}				
\end{algorithmic}\label{alg:JSTISO}
\end{algorithm}
\end{myitemize}

\section{An Alternative Loss Function for Improved Tracking} \label{sec:tirso}
The loss function in the previous approach is an instantaneous loss, which only depends on the current sample. Albeit it has low computational complexity per iteration and can be sufficient for online estimation of a static VAR model, it is sensitive to noise and input variability, and thus it may be not suitable for a time-varying model. In \cite{zaman2019online}, a running average loss function is designed drawing inspiration from the relation between least mean squares (LMS) and recursive least squares (RLS) to improve the tracking capabilities of TISO. In this paper, similar steps will lead to a second approach, where a running average loss function is adopted, which depends on the past reconstructed signal values. In the second approach, the loss function is set as: %in %\eqref{eq:randomloss} as
\vspace{-3mm}
\begin{multline} \label{eq:JSTIRSOlossfunction}
 \tilde \dLoss_t\left( 
\{ \hbm y[\tau]\}_{\tau = 0}^{t-1}, \bm y[t],  \left \{\bm A_p^{(t)} \right \}_{p=1}^{P}
 \right )=\\
\frac{1}{2} 
\left \lVert \bm y[t]\!\!-\!\!\sum_{p=1}^{P} \!\!\bm A_p^{(t)} \, \hat {\bm y}[t-p]\right \rVert_2^2 + \frac{\nu}{2 |\mathcal M_t|} \left \lVert \tbm y[t]-\bm m[t] \odot \bm y[t]\right \rVert_2^2 \\
 +\frac{1}{2} \sum_{\tau=P}^{t-1}\gamma^{t-\tau}\left \lVert \hat{\bm y}[\tau]-\sum_{p=1}^{P} \bm A_p^{(t)} \, \hat {\bm y}[\tau-p]\right \rVert_2^2,
\end{multline}
where $\gamma$ is a user-selected forgetting factor that controls the weight of past (reconstructed) samples of $\bm y[t]$. The modeling principles in the previous section (treating the previously reconstructed samples as a random variable, and minimizing over $\bm y[t]$) are applied to the alternative deterministic loss $\tilde \dLoss_t$, enabling to define the random loss function $\tilde \rLoss_t$ as %\vspace{-2mm}
\begin{equation} \label{eq:randomlossJSTIRSO}
\tilde \rLoss_t\left(\left \{\bm A_p^{(t)} \right \}_{p=1}^{P}\right) 
\define 
\min_{\bm y[t]} \tilde \dLoss_t\left( 
\{ \hbm y[\tau]\}_{\tau = 0}^{t-1}, \bm y[t], \left \{\bm A_p^{(t)} \right \}_{p=1}^{P}
\right),
\end{equation}
which can be rewritten in terms of %the loss function of JSTISO 
$\ell_t$ as:
\begin{multline} \label{eq:randomlossJSTIRSO1}
\! \!\!\!	\tilde \rLoss_t\left(\left \{\bm A_p^{(t)} \right \}_{p=1}^{P}\right) 
\!\!	= 
	\min_{\bm y[t]}  \dLoss_t\left( 
	\{ \hbm y[\tau]\}_{\tau = t-P}^{t-1}, \bm y[t], \left \{\bm A_p^{(t)} \right \}_{p=1}^{P}	\right) \\ + \frac{1}{2}\sum_{\tau=P}^{t-1}\gamma^{t-\tau}\left \lVert \hat{\bm y}[\tau]-\sum_{p=1}^{P} \bm A_p^{(t)} \, \hat {\bm y}[\tau-p]\right \rVert_2^2.
\end{multline}
%Next, we follow the same steps as in the previous case. 
Regarding the signal reconstruction, %as in \eqref{eq:nthsolutionsignalestimation}  for the present case. 
the minimizer of \eqref{eq:randomlossJSTIRSO} is:
\begin{align}
\hbm y[t] &= \underset{  \bm y[t] }{\arg \min}~ \tilde \dLoss_t\left( 
\{ \hbm y[\tau]\}_{\tau = 0}^{t-1}, \bm y[t], \left \{\bm A_p^{(t)} \right \}_{p=1}^{P} \right )\nonumber \\
%& =  \underset{  \bm y[t] }{\arg \min} \frac{1}{2} \left(
%\left \lVert \bm y[t]-\sum_{p=1}^{P} \bm A_p^{(t)} \, \hat {\bm y}[t-p]\right \rVert_2^2
%+ \sum_{\tau=P}^{t-1}\gamma^{t-\tau}\left \lVert \hat{\bm y}[\tau]-\sum_{p=1}^{P} \bm A_p^{(t)} \, \hat {\bm y}[\tau-p]\right \rVert_2^2 \right) \nonumber  \\
% & \quad + \frac{\nu}{2 |\mathcal M_t|} \left \lVert \tbm y[t]-\bm M_{t} \bm y[t]\right \rVert_2^2 \nonumber \\
&= \underset{  \bm y[t] }{\arg \min} \frac{1}{2} 
\left \lVert \bm y[t]-\sum_{p=1}^{P} \bm A_p^{(t)} \, \hat {\bm y}[t-p]\right \rVert_2^2 \nonumber \\
& \quad + \frac{\nu}{2 |\mathcal M_t|} \left \lVert \tbm y[t]-\bm m[t] \odot \bm y[t]\right \rVert_2^2. \label{eq:signalreconsJSTIRSO}
\end{align}
Observe that \eqref{eq:signalreconsJSTIRSO} coincides with the reconstruction problem in \eqref{eq:onlinesignalestimationproblem1}
and, therefore, its solution
%the solution of \eqref{eq:signalreconsJSTIRSO} 
is given by \eqref{eq:nthsolutionsignalestimation}.

Next, to derive the closed-form solution for $\tilde \rLoss_t$, we substitute the closed-form expression of $\hbm y[t]$ from \eqref{eq:nthsolutionsignalestimation} into \eqref{eq:randomlossJSTIRSO}:
\begin{align}
&\tilde \rLoss_t \left (\left \{\bm A_p^{(t)} \right \}_{p=1}^{P} \right )= \nonumber \\
&\frac{1}{2} \sum_{n=1}^N \Bigg [ U_n[t]
(\tilde y_n[t] -\hbm g^\top[t] \bm a_n[t])^2 \Bigg ]
 +\frac{1}{2} 
\sum_{n=1}^N \Big( \sum_{\tau=P}^{t-1}\gamma^{t-\tau} \hat y_n^2[\tau] \nonumber \\
&\quad  +\gamma \bm a_n^\top[t] \hbm \Phi[t-1] \bm a_n[t] - 2 \gamma \hbm r_n^\top[t-1] \bm a_n[t] \Big), \label{eq:lossJSTIRSO}
\end{align}
\nextver{try to respace this eq}
where
\vspace{-5mm}
\begin{subequations}
	\begin{align}
	\hbm \Phi[t] &\define \sum_{\tau=P}^{t} \gamma^{t-\tau}\hbm g[\tau] \hbm g^\top[\tau], \\
	\hbm r_n[t] & \define \sum_{\tau=P}^{t} \gamma^{t-\tau} \hat y_n[\tau] \hbm g[\tau].
	\end{align}
\end{subequations}
The variables above can be efficiently computed via recursive expressions.\footnote{The recursive expressions are presented in lines 4 and 6 in Procedure \ref{alg:JSTIRSO}.}
Note that
$\tilde{\rLoss}_t$ is also separable across nodes, i.e., 
\vspace{-2mm}
\begin{equation}
\tilde \rLoss_t (\cdot)= \sum_{n=1}^N \tilde\rLoss_t^{(n)}(\cdot),
\end{equation} 
\vspace{-2mm}
where 
\vspace{-2mm}
\begin{multline}
	\tilde\rLoss_t^{(n)} (\bm a_n) \define \rLoss_t^{(n)}(\bm a_n)+ \sum_{\tau=P}^{t-1}\gamma^{t-\tau} \hat y_n^2[\tau] +\gamma \bm a_n^\top\hbm \Phi[t-1] \bm a_n \\
	- 2 \gamma \hbm r_n^\top[t-1] \bm a_n.
\end{multline}
%\acom{define $\tilde\rLoss_t^{(n)}$ }
\begin{figure}
    \centering
    \includegraphics[width=\columnwidth]{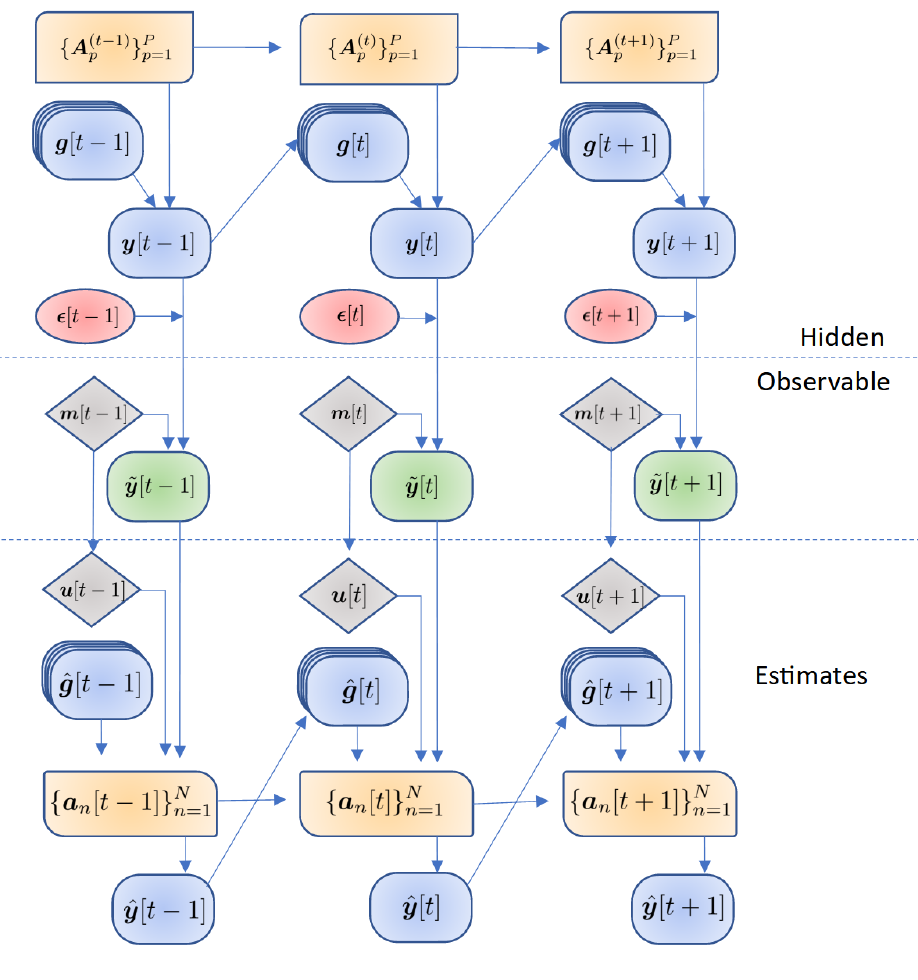}
    \caption{A simplified flow diagram of JSTISO/JSTIRSO.}
    \label{fig:Schematic}
\end{figure}

The algorithm JSTIRSO %, which is intended at minimizing $C_t(\cdot, \cdot)$ in \eqref{eq:function_c}, 
is obtained when $\mathcal{F}_t^{(n)}$ is set to be $\tilde \rLoss_t^{(n)}$, following similar steps to those in Sec. \ref{ss:derive_jstiso}. 
The gradient of $\tilde \rLoss_t^{(n)}$ w.r.t. $\bm a_n[t]$ is given by
\begin{multline}\label{eq:gradfJSTIRSO}
%\hat {\tbm v}_n[t] \define 
\nabla\tilde \rLoss_t^{(n)} (\bm a_n[t]) = 
U_n[t] \left(\hbm g[t] \hbm g^\top [t] \bm a_n[t] -\tilde y_n[t]  \hbm g[t]\right) \\
+\gamma \hbm \Phi[t-1] \bm a_n[t] - \gamma \hbm r_n[t-1].
\end{multline}	
All steps are summarized in \textbf{Procedure~\ref{alg:JSTIRSO}}.
The computational complexity of JSTIRSO is dominated by step 8 of \textbf{Procedure~2}, which is $\mathcal{O}(N^3P^2)$ operations per $t$.

The initial values for $\hbm \Phi[P-1]$ and $\hbm r_n[P-1]$ can be set depending on available prior information; if no such information is available, one can choose a small $\sigma$ and set $\hbm \Phi[P-1] = \sigma^2 \bm I$, and $\hbm r_n[P-1] = \bm 0, \forall ~ n$. 
A schematic diagram illustrating the variables involved in the generation, (partial) observation of signal entries, and estimation via JSTISO/JSTIRSO is given in Fig. 1.
%\ref{fig:Schematic}. 
The figure shows how the present estimate of topology parameters and the present estimate of the signal are dependent on the previous estimates of topology and signal in a sequential manner.
%, the gradient of the aforementioned loss function can be used to derive the JSTIRSO algorithm, tabulated in \textbf{Algorithm \ref{alg:JSTIRSO}}.
\begin{algorithm}
\caption{Tracking time-varying topologies with missing data via JSTIRSO}
\textbf{Input:} $P, \lambda, \alpha, \nu, \gamma,  \sigma^2 $ \\
%\textbf{Output:} $\{  {\tbm a}_n[t]\}_{n=1}^N$ \\
\begin{tabular}{rl}
\hspace{-2mm}\textbf{Initialization:} \hspace{-4mm}
& 
$\{\hbm y[\tau]\}_{\tau =0}^{P-1},\{{\tbm a}_n[P-1], \hbm r_n[P-1]\}_{n=1}^{N}, $
\\ 
&
$\hbm \Phi[P-1]$\\ 
\end{tabular}
\begin{algorithmic}[1] % The number tells where the line numbering should start
	\For {$t=P,P+1, \ldots $} %\Comment {loop for time}
	\State {Receive observation $\tbm y[t]$ and masking vector $\bm m[t]$} 
	%\If{$|\mathcal M_t|\ne 0$}
	\State {Obtain $\hbm g[t] $ from $\{\hbm y[t-p]\}_{p=1}^P$  via \eqref{eq:defOfgHat1} }%{eq:g} %\Comment{Compute the input of the VAR process at time $t$}
	\State $\hbm \Phi[t]= \gamma \, \hbm \Phi[t-1]+\hbm g[t]  \hbm g^\top[t]$ %\Comment {update the autocorrelation matrix}
	\For {$n=1, \ldots,N $} %\Comment {loop for nodes}
	\State $\hbm r_n[t]= \gamma \, \hbm r_n[t-1]+ \tilde y_n[t]\, \hbm g[t]$ %\Comment {update the cross-correlation vector}
	%\State {Obtain $U_n[t]$ via \eqref{eq:u}}
    \State{\textcolor{black}{$U_n[t] = \frac{\nu}{|\mathcal M_t|+\nu }  m_n[t]$}}
    %\textcolor{black}{\State{$\hbm v_n[t]=U_n[t] \left(\hbm g[t] \hbm g^\top [t] \bm a_n[t] -\tilde y_n[t]  \hbm g[t]\right)+\gamma \hbm \Phi[t-1] \bm a_n[t] - \gamma \hbm r_n[t-1]$}}
    \hspace{5mm}\textcolor{black}{\State{
    \begin{align*}
        \hat{\tilde{\bm v}}_n[t]=&U_n[t] \left(\hbm g[t] \hbm g^\top [t] \tbm a_n[t-1] -\tilde y_n[t]  \hbm g[t]\right)+\\
    &\gamma \hbm \Phi[t-1] \tbm a_n[t-1] - \gamma \hbm r_n[t-1]
    \end{align*}\vspace{-5mm}
    }}
	\State 
	$\tbm a^\text{f}_{n} [t]= \tbm a_n[t-1] - \alpha \textcolor{black}{ \hat{\tilde{\bm v}}_n[t]}$ [cf. \eqref{eq:gradfJSTIRSO}]	
	%U_n[t] \left(\hbm g[t] \hbm g^\top [t] \tbm a_n[t-1] -\tilde y_n[t]  \hbm g[t]\right) +  \gamma \hbm \Phi[t-1] \tbm a_n[t-1] - \gamma\hbm r_n[t-1]
	 %\Comment{Computing gradient}
	\For {$n'=1,2, \ldots ,N $ } %\Comment {loop for groups}
	% IF {$n\neq n'$} \State {Compute $\hat {\bm a}_{n,n'}[t+1]$  via \eqref{eq:comidsol1}}
	% \ELSE \State{Compute $\hat {\bm a}_{n,n'}[t+1]$ via \eqref{eq:comidsol2} }
	% \ENDIF  
	%\State  $\tbm a^\text{f}_{n,n'}[t] = \tbm a_{n,n'}[t-1]- \alpha \hbm v_{n,n'}[t]$
	\State {${\tbm a}_{n,n'}[t]=\tbm a^\text{f}_{n,n'}[t ]\left [1-\frac{\alpha \lambda ~ \mathds 1 \{ n \neq n'\}}{\left \lVert \tbm a^\text{f}_{n,n'}[t ] \right\rVert_2}\right]_+$ }
	\EndFor 
	\State \textbf{end for}
	\State {$ \tbm a_n[t]=\left [{\tbm a}_{n,1}^\top [t], \ldots, {\tbm a}_{n,N}^\top [t]\right ]^ \top$ }
	%\State {Compute  $\hat y_n[t]$ %using $\tilde y_n[t]$ via \eqref{eq:nthsolutionsignalestimation}}
    \State \textcolor{black}{$\hat y_n[t] = \left(1-U_n[t]\right) \hbm g[t]^\top \tbm a_n[t] + U_n[t] \tilde y_n[t]$}
	\EndFor
	\State \textbf{end for}
	\State {Output $\left \{  {\tbm a}_n[t]\right  \}_{n=1}^N, \hbm y[t]$}
	%\Else
	%\State{Assign the previous parameters to the current parameters}
	%\EndIf
	\EndFor
	\State \textbf{end for}					
\end{algorithmic}\label{alg:JSTIRSO}
\end{algorithm}

\section{Performance analysis}
\label{sec:analysis}
To analyze the performance of JSTIRSO, we present analytical results in this section. First, the assumptions considered in the analysis are stated and then, two Lemmas followed by the main theorem about the dynamic regret bound of JSTIRSO are presented. Moreover, a third lemma stating a bound on the error in the gradient is presented and discussed. Finally, a corollary with a simpler dynamic regret bound is presented. 

To quantify the inexactness in our algorithm, we need to define the following quantities: 
\begin{subequations}
	\begin{align}	
		\bm g[t]    & \define  \textrm{vec}\left (\big [\bm y[t-1], \ldots, \bm y [t -P] \big ]^\top \right ), \label{eq:defTrueg} \\
		\bm \Phi[t] &\define \sum_{\tau=P}^{t} \gamma^{t-\tau}\bm g[\tau] \bm g^\top[\tau], \label{eq:defTruePhi} \\
		\bm r_n[t] & \define \sum_{\tau=P}^{t} \gamma^{t-\tau} \tilde y_n[\tau] \bm g[\tau], \label{eq:defTruer}
	\end{align}
\end{subequations}
which can be respectively thought as the true versions of $\hbm g[t]$, $\hbm \Phi[t]$ and $\hbm r_n[t]$.
\begin{myitemize} 
\myitem \cmt{Assumptions}The following assumptions will be considered for the characterization of JSTIRSO:
%We denote the true Hessian of the loss function for JSTIRSO as $\bm \Phi[t] = \bm H \tilde \rLoss_t^{(n)}$.
\begin{enumerate}[{A}1.]
	\item {\textit{Bounded samples:} There exists $\energybound\!>0\!$
		such that $|y_n[t]|^2 \leq \energybound$, $|\hat y_n[t]|^2 \leq \energybound$, and $|\tilde y_n[t]|^2 \leq \energybound ~\forall \, n, t$. } \label{as:boundedprocess}
	\item {\textit{Bounded minimum eigenvalue of $\bm \Phi[t]$ and $\hbm \Phi[t]$:}
		There exists $\strongcvxparf~ >~0$ such that  $\lambda_{\mathrm{min}}(\bm \Phi[t])~\geq~ \strongcvxparf$ and $\lambda_{\mathrm{min}}(\hbm \Phi[t]) \geq \strongcvxparf , ~\forall \, t \geq P$.} \label{as:mineig}
	\item {\textit{Bounded maximum eigenvalue of $\bm \Phi[t]$ and $\hbm \Phi[t]$:}
		There exists  $L~>~0$ such that $\lambda_{\mathrm{max}}(\bm \Phi[t]) \leq L$ and $\lambda_{\mathrm{max}}(\hbm \Phi[t]) \leq L, ~\forall \, t \geq P$. }\label{as:maxeig}
	\item {\textit{Bounded errors in $ \bm g, \bm \Phi, \bm r_n$ due to noise, missing values:
		}
		\vspace{-4mm}
		\begin{subequations} \label{eqs:boundederrors}
		\begin{align}
		\label{eq:bg} \left \lVert  \hbm g[t]-\bm g[t] \right \rVert_2 & \leq B_{\bm g} \quad \forall ~t\\
		\label{eq:bphi} \lambda_{\mathrm{max}} \left ( \hbm \Phi[t]- \bm \Phi[t]\right) &\leq B_{\bm\Phi} \quad  \forall ~t\\
		%\lambda_{\mathrm{max}} \left ( U_n[t] \hbm g[t]\hbm g^\top [t] -\bm g[t]\bm g^\top [t] \right ) &\leq B_{\bm g} \quad \forall ~t\\
		\label{eq:br} \left \lVert  \hbm r_n[t]-\bm r_n[t] \right \rVert_2 & \leq B_{\bm r} \quad \forall~ n,t. 
		\end{align}
	\end{subequations}\label{as:errorbounds}}
\end{enumerate}%
\vspace{-4mm}
A1 entails no loss of generality since data are bounded in real-world applications. 
A2 holds in practice unless the data are redundant (meaning that some time series can be obtained as a linear combination of the others), that is, 
it will be satisfied for a sufficiently large number of samples. Thus, A2 is a reasonable assumption in real-world applications. 
A3 is fulfilled when the true signal values and their corresponding reconstructed values are bounded. 
A4 sets a limit on the magnitude of the error introduced in various quantities due to noise and missing values. 
A4 is satisfied when the noise and number of missing values are limited such that the errors in $ \bm g, \bm \Phi, \bm r_n$ after the signal estimation step are always bounded by the given constants.

The next results depend on the error in the gradient, i.e.,
\begin{equation}\label{eq:erroringrad}
\bm e^{(n)}[t] \define \nabla \tilde \rLoss_t^{(n)}\!(\bm a_n[t])-  \nabla%_{\bm a_n[t]} 
\tilde \rLoss_t^{(n) \text{true}} (\bm a_n[t]),%\! \left[\min_{ y_n[t]} \tilde \dLoss_t^{(n)} \left( 
%\{ \bm y[\tau]\}_{\tau = 0}^{t-1},  y_n[t],\! \bm a_n[t]  
%\right) \right],
% \tilde \rLoss_t^{(n)}(\bm a), 
\end{equation}
where 
$
\tilde \rLoss_t^{(n) \text{true}} (\bm a_n[t]) \define \min_{ y_n[t]} \tilde \dLoss_t^{(n)} \left( 
\{ \bm y[\tau]\}_{\tau = 0}^{t-1},  y_n[t],\! \bm a_n[t]  \right) 
$
is the true (exact) gradient (where $\{ \bm y[\tau]\}_{\tau = 0}^{t-1}$ are the (unobservable) true signal values),
and 
$\nabla \tilde \rLoss_t^{(n)}(\bm a_n)$ 
is the inexact gradient defined in \eqref{eq:gradfJSTIRSO}. The latter is inexact due to the error in the reconstructed entries of $\hbm g$, and the error in $\hbm g$ comes in turn from the missing values and noisy observations.

\myitem \cmt{Regret bound}Dynamic regret analysis is generally expressed in terms of metrics that express how challenging tracking becomes, e.g., how fast the optimal parameters vary. In our specific case, the dynamic regret will be expressed in terms of the variation in consecutive optimal solutions (often referred to as path length \cite{zinkevich2003online}) and the error in the gradient \cite{dixit2019onlineproximal}. If we define $\tilde h_t^{(n)} \define \tilde \rLoss_t^{(n)} +\Omega^{(n)}$, and let $ \timevarhindsight[t] \define \arg \min _{\bm a_n} \tilde h_t^{(n)}(\bm a_n)$ be the time-varying optimal solution, the path length is given by 
\vspace{-2mm}
\begin{equation} \label {eq:pathlength}
W^{(n)}[T] \triangleq \sum_{t=P+1}^T \left \lVert \timevarhindsight[t] - \timevarhindsight[t-1] \right \rVert_2.
\end{equation}
Also, we define the cumulative (norm of the) gradient error as 
\vspace{-2mm}
\begin{equation} \label {eq:cumulativeerror} 
E^{(n)}[T] \triangleq \sum _{t=P}^T %\mathbb{E}
 \left \lVert \bm e ^{(n)}[t] \right \rVert_2 .
\end{equation} 
The dynamic regret for JSTIRSO for the $n$-th node is:
\begin{equation} \label{eq:defdynamicregretTIRSO}
\tilde R_d^{(n)}[T]\triangleq  \sum_{t=P}^{T}\big [ \tilde h_t^{(n)}(\tbm a_n[t])- \tilde h_t^{(n)}( \timevarhindsight[t])\big  ],
\end{equation}
where
\begin{myitemize}%
	\myitem \cmt{estimate}$\tbm a_n[t]$ is the JSTIRSO topology estimate. Next, we present two lemmas that will be instrumental in deriving the dynamic regret of JSTIRSO. 
	%\myitem \cmt{optimal solution}and  $ \timevarhindsight[t] = \arg \min _{\tbm a_n} \tilde h_t^{(n)}(\tbm a_n)$. 
\end{myitemize}
\begin{mylemma} \label{lemma:boundedgrad}
	Under assumptions A\ref{as:boundedprocess} and A\ref{as:maxeig}, we have
	\begin{multline} \label{eq:JSTIRSOgradBound}
		 \left \lVert \nabla \tilde \rLoss_t^{(n)} (\tbm a_n[t]) \right \rVert_2  \leq \\
\frac{\nu}{1+\nu} \left(PN \energybound+ 2 \sqrt{PN \energybound} B_{\bm g} + B_{\bm g}^2 +\gamma L\frac{1+\nu}{\nu}\right ) \times\\ \frac{1}{\strongcvxparf \gamma }\!\left (\!\! \frac{\nu}{1\!+\!\nu}  \sqrt{PN}\energybound \! +\! \frac{\sqrt{PN} \energybound}{1\!-\!\gamma }\!\!\right)\! +\! \left ( \!\frac{\nu}{1\!+\!\nu} \!+\!\frac{\gamma}{1\!-\!\gamma } \!\right)\!\! \sqrt{PN} \energybound \\
 \define  B_{\bm v}
	\end{multline}
\end{mylemma}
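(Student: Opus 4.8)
The plan is to exploit that the gradient in \eqref{eq:gradfJSTIRSO} is an \emph{affine} function of its argument. Writing $\bm H_t^{(n)} \define U_n[t]\,\hbm g[t]\hbm g^\top[t] + \gamma\,\hbm\Phi[t-1]$ and $\bm b_t^{(n)} \define U_n[t]\,\tilde y_n[t]\,\hbm g[t] + \gamma\,\hbm r_n[t-1]$, we have $\nabla\tilde{\mathcal L}_t^{(n)}(\bm a_n) = \bm H_t^{(n)}\bm a_n - \bm b_t^{(n)}$, so that by the triangle inequality and the operator-norm bound $\|\nabla\tilde{\mathcal L}_t^{(n)}(\tbm a_n[t])\|_2 \le \lambda_{\mathrm{max}}(\bm H_t^{(n)})\,\|\tbm a_n[t]\|_2 + \|\bm b_t^{(n)}\|_2$. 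The proof then reduces to three independent estimates: (i) the maximum eigenvalue of $\bm H_t^{(n)}$, (ii) the norm of the linear term $\bm b_t^{(n)}$, and (iii) the norm of the iterate $\tbm a_n[t]$.

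For (i) and (ii) I would only use the sample and spectral bounds. From A\ref{as:boundedprocess} each entry of $\hbm g[t]$ and $\bm g[t]$ is bounded by $\sqrt{\energybound}$, so $\|\bm g[t]\|_2 \le \sqrt{PN\energybound}$ and, via \eqref{eq:bg}, $\|\hbm g[t]\|_2 \le \sqrt{PN\energybound} + B_{\bm g}$; moreover $U_n[t]\le \nu/(1+\nu)$ by \eqref{eq:u}. Hence $\lambda_{\mathrm{max}}(\bm H_t^{(n)}) \le U_n[t]\|\hbm g[t]\|_2^2 + \gamma\,\lambda_{\mathrm{max}}(\hbm\Phi[t-1])$, and bounding $\|\hbm g[t]\|_2^2 \le (\sqrt{PN\energybound}+B_{\bm g})^2$ together with A\ref{as:maxeig} gives exactly the first parenthesised factor $\tfrac{\nu}{1+\nu}\big(PN\energybound + 2\sqrt{PN\energybound}\,B_{\bm g} + B_{\bm g}^2 + \gamma L\tfrac{1+\nu}{\nu}\big)$. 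For $\bm b_t^{(n)}$, the first summand is bounded by $\tfrac{\nu}{1+\nu}\sqrt{PN}\,\energybound$ (using $|\tilde y_n[t]|\le\sqrt{\energybound}$ and $\|\hbm g[t]\|_2\le\sqrt{PN\energybound}$), while the geometric weights in $\hbm r_n[t-1]$ give $\gamma\|\hbm r_n[t-1]\|_2 \le \tfrac{\gamma}{1-\gamma}\sqrt{PN}\,\energybound$; their sum is the additive term $\big(\tfrac{\nu}{1+\nu}+\tfrac{\gamma}{1-\gamma}\big)\sqrt{PN}\,\energybound$.

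The crux is (iii), the bound on the iterate norm, and this is where A\ref{as:mineig} enters. The quadratic part of $\tilde{\mathcal L}_t^{(n)}$ has Hessian $\bm H_t^{(n)} \succeq \gamma\,\hbm\Phi[t-1] \succeq \gamma\strongcvxparf\,\bm I$, so $\tilde{\mathcal L}_t^{(n)}$ is $\gamma\strongcvxparf$-strongly convex and its unconstrained minimizer $\bm a_n^\star = (\bm H_t^{(n)})^{-1}\bm b_t^{(n)}$ obeys $\|\bm a_n^\star\|_2 \le \tfrac{1}{\gamma\strongcvxparf}\|\bm b_t^{(n)}\|_2 \le \tfrac{1}{\strongcvxparf\gamma}\big(\tfrac{\nu}{1+\nu}\sqrt{PN}\energybound + \tfrac{\sqrt{PN}\energybound}{1-\gamma}\big)$, which is precisely the second parenthesised factor (here using the looser $\tfrac{1}{1-\gamma}$ weight sum for $\hbm r_n$). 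To transfer this bound to the actual prox-gradient iterate I would invoke the shrinkage property of the group soft-thresholding operator \eqref{eq:updateproxonline}: $\textbf{prox}_{\Omega^{(n)}}^{\alpha}$ is norm-nonincreasing because it maps each block towards the origin, so $\|\tbm a_n[t]\|_2 \le \|\tbm a^{\mathrm f}_n[t]\|_2$, and for a suitable step size the gradient map $\bm a\mapsto\bm a-\alpha\nabla\tilde{\mathcal L}_t^{(n)}(\bm a)$ contracts towards $\bm a_n^\star$, so the iterate stays inside the ball of radius $\|\bm a_n^\star\|_2$. I expect this to be the main obstacle, since $\tbm a_n[t]$ is the iterate rather than a minimizer and, strictly, the drift of $\bm a_n^\star$ across time must be controlled (or, as is common in this literature, the iterates assumed to lie in a compact set); the clean cancellation of $\lambda$ in the final bound confirms that the intended radius is that of the smooth-part minimizer, not of the full $\tilde h_t^{(n)}$.

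Finally, substituting $\|\tbm a_n[t]\|_2 \le \|\bm a_n^\star\|_2$ together with the estimates on $\lambda_{\mathrm{max}}(\bm H_t^{(n)})$ and $\|\bm b_t^{(n)}\|_2$ into $\|\nabla\tilde{\mathcal L}_t^{(n)}(\tbm a_n[t])\|_2 \le \lambda_{\mathrm{max}}(\bm H_t^{(n)})\,\|\tbm a_n[t]\|_2 + \|\bm b_t^{(n)}\|_2$ reproduces the claimed expression for $B_{\bm v}$ term by term.
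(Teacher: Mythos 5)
Your decomposition and constants in steps (i)--(ii) match the paper's proof of Lemma \ref{lemma:boundedgrad} essentially verbatim: there too, the gradient \eqref{eq:gradfJSTIRSO} is split by the triangle inequality into the pieces you package as $\lambda_{\mathrm{max}}(\bm H_t^{(n)})\,\|\tbm a_n[t]\|_2+\|\bm b_t^{(n)}\|_2$, with $\lambda_{\mathrm{max}}\big(\hbm g[t]\hbm g^\top[t]\big)\leq PN\energybound+2\sqrt{PN\energybound}\,B_{\bm g}+B_{\bm g}^2$ and $\|\hbm r_n[t-1]\|_2\leq \sqrt{PN}\energybound/(1-\gamma)$ obtained exactly as you obtain them. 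You also correctly sensed that, despite the lemma's header listing only A\ref{as:boundedprocess} and A\ref{as:maxeig}, the proof needs A\ref{as:mineig} and A\ref{as:errorbounds} together with the fact that $\tbm a_n[t]$ is generated by JSTIRSO with $\alpha\leq 1/L$; the paper's own appendix invokes all of these.

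The genuine gap is in step (iii), the transfer of the bound to the iterate. Contraction of the gradient map toward $\bm a_n^\star$ does \emph{not} imply that the iterate stays in the ball of radius $\|\bm a_n^\star\|_2$: even for a fixed target, that ball is not invariant (take $\bm H=\mathrm{diag}(\mu,L)$, $\bm b=(0,b_2)^\top$, and a point on the ball along the weak-curvature axis: one step strictly increases the norm), and with a time-varying $\bm a_n^\star[t]$ unrolling the contraction only yields a radius of order $\|\bm a_n^\star\|_2/(\alpha\gamma\strongcvxparf)$. What does work --- and is precisely the paper's argument --- is to use the affine structure of the update directly: since $\bm H_t^{(n)}\preceq \hbm g[t]\hbm g^\top[t]+\gamma\hbm\Phi[t-1]=\hbm\Phi[t]\preceq L\bm I$ and $\alpha\leq 1/L$, one has $\bm 0\preceq \bm I-\alpha\bm H_t^{(n)}\preceq(1-\alpha\gamma\strongcvxparf)\bm I$, so together with the prox shrinkage $\|\tbm a_n[t+1]\|_2\leq\|\tbm a_n^{\mathrm f}[t]\|_2$ you get the recursion $\|\tbm a_n[t+1]\|_2\leq(1-\alpha\gamma\strongcvxparf)\|\tbm a_n[t]\|_2+\alpha\|\bm b_t^{(n)}\|_2$ [cf. \eqref{eq:boundonageneral}], whose unrolled fixed point is $\sup_t\|\bm b_t^{(n)}\|_2/(\gamma\strongcvxparf)$ [cf. \eqref{eq:bound_a}]. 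This invariant-ball radius coincides with your \emph{upper bound} $\|\bm b_t^{(n)}\|_2/(\gamma\strongcvxparf)$ on $\|\bm a_n^\star\|_2$ (though not with $\|\bm a_n^\star\|_2$ itself), which is why your final $B_{\bm v}$ is nonetheless term-by-term correct. In short: you guessed the right radius, but its justification must go through the operator-norm recursion on the update (with a bounded or zero initialization), not through contraction toward the moving minimizer.
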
  
\begin{proof}
	See Appendix A in the supplementary material.
\end{proof}
\begin{mylemma}\label{lemma:boundedsubgrad}
	All the subgradients of the regularization function $\nReg$ are bounded by  $\lambda \sqrt{N}$, i.e., $\lVert \bm u_t \rVert_2\leq \lambda \sqrt{N}$, where $ \bm u_t \in \partial  \Omega^{(n)}(\tbm a_n[t])$.
\end{mylemma}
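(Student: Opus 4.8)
The plan is to exploit the block-separable structure of $\Omega^{(n)}$ and bound each block's contribution using the elementary fact that the subdifferential of the Euclidean norm is contained in the closed unit ball. Recall that $\Omega^{(n)}(\bm a_n) = \lambda \sum_{n'=1}^N \mathds{1}\{n \neq n'\}\, \lVert \bm a_{n,n'}\rVert_2$, where $\bm a_n = [\bm a_{n,1}^\top, \ldots, \bm a_{n,N}^\top]^\top$ partitions into $N$ blocks of length $P$. Since $\Omega^{(n)}$ is a finite sum of functions, each depending on a single disjoint block of coordinates, its subdifferential factorizes as the Cartesian product of the per-block subdifferentials; concretely, any $\bm u_t \in \partial \Omega^{(n)}(\tbm a_n[t])$ partitions as $\bm u_t = [\bm u_{t,1}^\top, \ldots, \bm u_{t,N}^\top]^\top$ with $\bm u_{t,n'} \in \lambda\, \mathds{1}\{n \neq n'\}\, \partial \lVert \tbm a_{n,n'}[t]\rVert_2$ for each $n'$.

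The key step is the standard characterization of the subdifferential of the Euclidean norm: $\partial \lVert \bm x\rVert_2 = \{\bm x / \lVert \bm x\rVert_2\}$ when $\bm x \neq \bm 0$, and $\partial \lVert \bm 0\rVert_2 = \{\bm s : \lVert \bm s\rVert_2 \leq 1\}$ at the origin. In either regime every element of the subdifferential has Euclidean norm at most one, so that $\lVert \bm u_{t,n'}\rVert_2 \leq \lambda\, \mathds{1}\{n \neq n'\}$ holds for each block index $n'$.

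Finally I would assemble the blocks. Summing the squared block norms gives $\lVert \bm u_t\rVert_2^2 = \sum_{n'=1}^N \lVert \bm u_{t,n'}\rVert_2^2 \leq \lambda^2 \sum_{n'=1}^N \mathds{1}\{n \neq n'\} = (N-1)\lambda^2 \leq N\lambda^2$, and taking square roots yields $\lVert \bm u_t\rVert_2 \leq \lambda\sqrt{N}$, as claimed. (Note that the indicator kills the self-loop term $n'=n$, so the argument in fact delivers the slightly sharper constant $\lambda\sqrt{N-1}$; the stated $\lambda\sqrt{N}$ is a convenient, looser bound.)

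There is no genuine obstacle in this proof; the only point requiring a little care is the set-valued nature of $\partial \lVert \cdot\rVert_2$ at the origin, where one must invoke the full subdifferential (the closed unit ball) rather than an ordinary gradient. The uniform bound on all subgradients still holds because every selection from that ball has norm at most one, which is precisely why the conclusion is phrased for arbitrary $\bm u_t \in \partial \Omega^{(n)}(\tbm a_n[t])$ rather than for a single gradient vector.
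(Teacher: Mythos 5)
Your proof is correct and follows essentially the same route as the paper, which does not spell the argument out but simply defers to the proof of Theorem 5 in \cite{zaman2019online}, where the bound rests on exactly this combination of block-separability of the group-lasso penalty and the fact that $\partial\lVert\cdot\rVert_2$ is contained in the closed unit ball (singleton $\bm x/\lVert\bm x\rVert_2$ away from the origin, the full ball at $\bm 0$). Your self-contained version is, if anything, slightly sharper, since the indicator removes the $n'=n$ block and yields $\lambda\sqrt{N-1}$, of which the stated $\lambda\sqrt{N}$ is a looser consequence.
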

\begin{proof}
	See the proof of Theorem 5 in \cite{zaman2019online}.
\end{proof}
Next, we present a bound on the dynamic regret of JSTIRSO.
\cmt{Theorem statement}%
\begin{theorem} \label{th:dynamicregretboundTIRSO}
	\begin{myitemize}%
		\myitem \cmt{assumptions}Under assumptions A\ref{as:boundedprocess}, A\ref{as:mineig},  A\ref{as:maxeig}, and A\ref{as:errorbounds}, %and A\ref{as:errorbounds},
		\myitem \cmt{JSTIRSO }let $\{\tbm a_n[t]\}_{t=P}^{T}$ be generated by JSTIRSO (\textbf{Procedure~2}) with
		\myitem \cmt{constant step size}a constant step size $\alpha \in  (0,1/L]$.
		\myitem \cmt{bounded variations}If there exists $\sigma$ such that
		\begin{equation} \label {eq:boundedvariationsassump1}
		\left \lVert \timevarhindsight[t] - \timevarhindsight[t-1] \right \rVert_2 \leq \sigma,~ \forall\,t\geq P+1,
		\end{equation}
	\end{myitemize}%
	then the dynamic regret of JSTIRSO satisfies:
	\begin{multline}\label{eq:regretboundTIRSO}
	\tilde R_d^{(n)}[T] \leq \frac{1}{\alpha \strongcvxparf}\Big[B_{\bm v} +\lambda \sqrt{N}\Big]  \big(\lVert \tbm a_n[P]- \timevarhindsight[P] \rVert_2 + W^{(n)}[T] \\+ \alpha E^{(n)}[T]\big ),
	\end{multline}
where $B_{\bm v}$ is defined in \eqref{eq:JSTIRSOgradBound}. 
\end{theorem}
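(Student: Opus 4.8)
The plan is to follow the inexact proximal OGD template of \cite{dixit2019onlineproximal}: reduce the per-step regret to the instantaneous tracking error $\|\tbm a_n[t]-\timevarhindsight[t]\|_2$, bound that error by a one-step contraction recursion, and sum the resulting geometric series. First I would bound each summand of $\tilde R_d^{(n)}[T]$. Since $\timevarhindsight[t]$ minimizes the convex function $\tilde h_t^{(n)}=\tilde{\mathcal L}_t^{(n)}+\Omega^{(n)}$, the subgradient inequality gives $\tilde h_t^{(n)}(\tbm a_n[t])-\tilde h_t^{(n)}(\timevarhindsight[t])\le \|\bm u\|_2\,\|\tbm a_n[t]-\timevarhindsight[t]\|_2$ for any $\bm u\in\partial \tilde h_t^{(n)}(\tbm a_n[t])$. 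Splitting $\partial\tilde h_t^{(n)}=\nabla\tilde{\mathcal L}_t^{(n)}+\partial\Omega^{(n)}$ and invoking Lemma~\ref{lemma:boundedgrad} (bound $B_{\bm v}$ on the smooth part) together with Lemma~\ref{lemma:boundedsubgrad} (bound $\lambda\sqrt N$ on the regularizer part), the combined subgradient is bounded by $B_{\bm v}+\lambda\sqrt N$, so that $\tilde R_d^{(n)}[T]\le (B_{\bm v}+\lambda\sqrt N)\sum_{t=P}^{T}\|\tbm a_n[t]-\timevarhindsight[t]\|_2$.

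The central step is a contraction for the tracking error. I would write the JSTIRSO iterate as $\tbm a_n[t]=\textbf{prox}_{\Omega^{(n)}}^{\alpha}(\tbm a_n[t-1]-\alpha\nabla\tilde{\mathcal L}_t^{(n)}(\tbm a_n[t-1]))$ and use first-order optimality of $\timevarhindsight[t]$ to cast it as a prox fixed point. Subtracting the two identities, using the non-expansiveness of the prox operator, and adding and subtracting the exact gradient to isolate the inexactness $\bm e^{(n)}[t]$, I would then apply the classical contraction of the gradient-descent map of a $\strongcvxparf$-strongly convex, $L$-smooth function: since A\ref{as:mineig}--A\ref{as:maxeig} supply strong convexity $\strongcvxparf$ and smoothness $L$ for $\tilde{\mathcal L}_t^{(n)}$ (through the eigenvalue bounds on $\hbm\Phi$) and $\alpha\le 1/L$, the map $\bm a\mapsto \bm a-\alpha\nabla\tilde{\mathcal L}_t^{(n)}(\bm a)$ contracts with factor $1-\alpha\strongcvxparf$. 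This yields $\|\tbm a_n[t]-\timevarhindsight[t]\|_2\le (1-\alpha\strongcvxparf)\|\tbm a_n[t-1]-\timevarhindsight[t]\|_2+\alpha\|\bm e^{(n)}[t]\|_2$, with A\ref{as:errorbounds} ensuring the error stays controlled. Inserting the triangle inequality $\|\tbm a_n[t-1]-\timevarhindsight[t]\|_2\le \|\tbm a_n[t-1]-\timevarhindsight[t-1]\|_2+\|\timevarhindsight[t]-\timevarhindsight[t-1]\|_2$, where each optimizer step is bounded through \eqref{eq:boundedvariationsassump1}, converts this into a recursion for $\delta_t\triangleq\|\tbm a_n[t]-\timevarhindsight[t]\|_2$.

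Finally I would sum the recursion $\delta_t\le (1-\alpha\strongcvxparf)\delta_{t-1}+(1-\alpha\strongcvxparf)\|\timevarhindsight[t]-\timevarhindsight[t-1]\|_2+\alpha\|\bm e^{(n)}[t]\|_2$. With contraction factor $\rho=1-\alpha\strongcvxparf\in[0,1)$, unrolling and swapping the order of summation of the geometric series (and bounding $\rho\le 1$ on the path-length sum) gives $\sum_{t=P}^{T}\delta_t\le \frac{1}{1-\rho}(\delta_P+\sum_{t=P+1}^{T}\|\timevarhindsight[t]-\timevarhindsight[t-1]\|_2+\alpha\sum_{t=P}^{T}\|\bm e^{(n)}[t]\|_2)$, where $\frac{1}{1-\rho}=\frac{1}{\alpha\strongcvxparf}$ and the two sums are exactly $W^{(n)}[T]$ and $E^{(n)}[T]$ from \eqref{eq:pathlength} and \eqref{eq:cumulativeerror}. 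Combining with the per-step Lipschitz bound of the first paragraph delivers the stated bound.

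The main obstacle is the contraction step. One must carefully reconcile the inexact gradient that actually drives the algorithm with the fixed-point characterization of the comparator $\timevarhindsight[t]$, so that the whole effect of the inexactness collapses into the single additive term $\alpha\|\bm e^{(n)}[t]\|_2$ rather than a harder-to-control gradient mismatch; and one must verify that $\tilde{\mathcal L}_t^{(n)}$ is genuinely strongly convex and smooth with the constants entering A\ref{as:mineig}--A\ref{as:maxeig}, so that the gradient map contracts with the sharp factor $1-\alpha\strongcvxparf$. The requirement $\alpha\le 1/L$ is precisely what keeps $1-\alpha L\ge 0$ and makes $1-\alpha\strongcvxparf$ the operative contraction constant, hence the $\tfrac{1}{\alpha\strongcvxparf}$ prefactor.
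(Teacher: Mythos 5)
Your proposal is correct and follows essentially the same route as the paper: the paper likewise bounds each regret summand via the convexity (subgradient) inequality plus Cauchy--Schwarz, bounds the subgradient norm by $B_{\bm v}+\lambda\sqrt{N}$ using Lemmas~\ref{lemma:boundedgrad} and~\ref{lemma:boundedsubgrad}, and then bounds $\sum_{t=P}^{T}\lVert \tbm a_n[t]-\timevarhindsight[t]\rVert_2$ by $\tfrac{1}{\alpha\strongcvxparf}\bigl(\lVert \tbm a_n[P]-\timevarhindsight[P]\rVert_2+W^{(n)}[T]+\alpha E^{(n)}[T]\bigr)$. The only difference is that the paper imports this last step as a black box (Lemma~2 of the IP-OGD reference \cite{dixit2019onlineproximal}, after verifying its smoothness, strong-convexity, and Lipschitz-continuity hypotheses exactly as you do), whereas you re-derive that lemma inline via prox non-expansiveness, the $(1-\alpha\strongcvxparf)$ contraction of the gradient map, and geometric summation.
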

\begin{proof}
	\begin{myitemize}%
		\myitem \cmt{Dynamic regret result}In order to derive the dynamic regret of JSTIRSO, since $\tilde h_t$ is convex, we have by definition
		\vspace{-2mm}
		\begin{equation} \label{eq:firstordercvxcond}
		\tilde h_t^{(n)}(\timevarhindsight[t]) \geq \tilde h_t^{(n)}(\tbm a_n[t]) + \left( \nabla^s \tilde h_t^{(n)}(\tbm a_n[t])\right )^\top \!\!\left(  \timevarhindsight[t] \!- \tbm a_n[t]\right ),
		\end{equation}
		$\forall \, \timevarhindsight[t], \tbm a_n[t]$, where a subgradient of $\tilde h_t^{(n)}$ is given by %\textcolor{red}{[Do we mean $\nabla^s \tilde h$ or $\nabla \tilde h$ here?]}
		$\nabla^s \tilde h_t^{(n)}(\tbm a_n[t])= \nabla \tilde \rLoss_t^{(n)}(\tbm a_n[t]) + \bm u_t$ with $ \bm u_t \in \partial  \Omega^{(n)}(\tbm a_n[t])$. Rearranging \eqref {eq:firstordercvxcond} and summing both sides of the inequality from $t=P$ to $T$ results in:
		\vspace{-3mm}
		\begin{multline}
		\sum_{t=P}^{T}\left [ \tilde h_t^{(n)}(\tbm a_n[t])- \tilde h_t^{(n)}(\timevarhindsight[t])\right  ] \leq \\
		 \sum_{t=P}^{T} \left(\nabla^s \tilde h_t^{(n)}(\tbm a_n[t])\right )^\top 
		 \left( \tbm a_n[t] -\timevarhindsight[t]\right ).
		\end{multline}
		%where $\nabla^s \tilde h_t^{(n)}(\tbm a_n[t])= \nabla \tilde \rLoss_t^{(n)}(\tbm a_n[t]) + \bm u_t$ with $ \bm u_t \in \partial  \Omega^{(n)}(\tbm a_n[t+1])$.
		By applying the Cauchy-Schwarz inequality to each term of the summation in the r.h.s. of the above inequality, we obtain
		\vspace{-3mm}
		\begin{multline} \label{eq:dynamicregretinequality}
		\sum_{t=P}^{T}\left [ \tilde h_t^{(n)}(\tbm a_n[t])- \tilde h_t^{(n)}(\timevarhindsight[t])\right  ]\leq \\
		\sum_{t=P}^{T} \left \lVert \nabla^s \tilde h_t^{(n)}( \tbm a_n[t]) \right \rVert_2 
		\cdot \left\lVert \tbm a_n[t] -\timevarhindsight[t]\right \rVert_2.
		\end{multline}
		The next step is to derive an upper bound on $\lVert \nabla^s \tilde h_t^{(n)}( \tbm a_n[t]) \rVert_2$. From the definition of $ \nabla^s \tilde h_t^{(n)}( \tbm a_n[t]) $ and by the triangular inequality, we have
		\begin{equation} \label{eq:boundonsubgrad}
		\lVert \nabla^s \tilde h_t^{(n)}( \tbm a_n[t]) \rVert_2 \leq \lVert  \nabla  \tilde \rLoss_t^{(n)}( \tbm a_n[t]) \rVert_2 + \left \lVert \bm u_t\right \rVert_2.
		\end{equation}
		%where $ \bm u_t \in \partial  \Omega^{(n)}(\tbm a_n[t])$.

		From Lemma \ref{lemma:boundedgrad} and Lemma \ref{lemma:boundedsubgrad}, we have $\lVert \nabla^s \tilde h_t^{(n)}( \tbm a_n[t]) \rVert_2 \leq B_{\bm v}+\lambda \sqrt{N}$. Substituting it into \eqref {eq:dynamicregretinequality} leads to:
		\vspace{-2mm}
		\begin{multline} \label{eq:boundTIRSO}
		\sum_{t=P}^{T} \left[ \tilde h_t^{(n)}(\tbm a_n[t])- \tilde h_t^{(n)}(\timevarhindsight[t])\right  ] \\ \leq \sum_{t=P}^{T} \Big [B_{\bm v}+ \lambda \sqrt{N}\Big ]\left \lVert \tbm a_n[t] -\timevarhindsight[t]\right \rVert_2.
		\end{multline}
	%Note that $U_n[t]$ is upper-bounded by $\nu/(1+\nu)$.
		\myitem 	
		Next, we apply Lemma 2 
		in \cite{dixit2019onlineproximal} in order to bound $\sum_{t=P}^T\lVert \tbm a_n[t] -\timevarhindsight[t] \rVert_2$ in \eqref {eq:boundTIRSO}. The hypotheses of Lemma 2 are Lipschitz smoothness of $\tilde \rLoss_t^{(n)}$, Lipschitz continuity of $\nReg$, and strong convexity of $\tilde \rLoss_t^{(n)} $. Lipschitz continuity of $\nReg$ is proved in %\eqref {eq:Lipschitzinequaltiy}
		Lemma \ref{lemma:boundedsubgrad} whereas strong convexity of $\tilde \rLoss_t^{(n)} $ is implied by the assumption A\ref{as:mineig}.
		\myitem \cmt{proving Lipschitz smoothness of loss function $f$} %[\textcolor{red}{Lipschitz smoothness of the loss fn can be immediately verified from \eqref{eq:tirsograd}}] 
		To verify that $\rev{ \tilde \rLoss_t^{(n)} }$ is Lipschitz-smooth, it suffices to realize that $\tilde \rLoss_t^{(n)}$ is twice-differentiable, and thus assumption A\ref{as:maxeig} is equivalent to saying that $\tilde \rLoss_t^{(n)}$ is $L$-Lipschitz smooth. 
		\par
		\myitem \cmt{Applying Lemma 2}To apply \cite [Lemma 2]{dixit2019onlineproximal}, one can set the variable $K$ in that context as $T-P+1$, $g_k$ as $\Omega^{(n)}$, and $f_k$ as $\tilde\rLoss_{P+k-1}^{(n)}$, and it follows that $\bm x_k$ in \cite {dixit2019onlineproximal} equals $\tbm a_n[P+k-1]$ and $\bm x_k^\circ$ equals $\tbm a_n^\circ[P+k-1]$. Then, since we have already shown above that the hypotheses of Lemma 2 in \cite {dixit2019onlineproximal} hold in our case, applying it to bound $\lVert \tbm a_n[t] -\timevarhindsight[t]\rVert_2$ in \eqref {eq:boundTIRSO} yields:
		\vspace{-2mm}
		\begin{align*}
		&\sum_{t=P}^{T}\left [ \tilde h_t^{(n)}(\tbm a_n[t])- \tilde h_t^{(n)}(\timevarhindsight[t])\right  ] \leq \nonumber \\
		&\frac{1}{\alpha \strongcvxparf}\Big[ B_{\bm v}  \!+\!\lambda \sqrt{N}\Big]  \left(\lVert \tbm a_n[P]\!- \!\timevarhindsight[P] \rVert_2\! + \!W^{(n)}[T]\!+\! \alpha E^{(n)}[T]\right ).
		\end{align*}
	\end{myitemize}%
	This concludes the proof (note that initializing $\tbm a_n[P]=\bm 0_{NP}$ can lead to further simplification).
\end{proof}

The bound on the dynamic regret for JSTIRSO depends on $W^{(n)}[T]$ and $E^{(n)}[T]$, which formalizes how much the variability and uncertainty affect the parameter estimation. 
This has also been verified experimentally, as it is shown in Section VII, Fig. \ref{fig:dynamicregret}, which shows that for a higher missing probability, the normalized dynamic regret has higher values, as expected. Moreover, when there is an abrupt model transition,  
the normalized regret starts to increase. 
It should be noticed that the theoretical assumptions under which the dynamic regret becomes sublinear in $T$ (sublinear path length $W^{(n)}[T]$, and sublinear cumulative error $E^{(n)}[T]$) may not hold in practice when the model parameters are changing all the time, and in the presence of observation noise and missing data; however, the rate of growth of the regret can be used as a benchmark to compare different approaches.

The cumulative error $E^{(n)}[T]$ can be bounded as a function of the quantities introduced in A\ref{as:errorbounds} (related to the inexactness of the reconstructed samples). The following lemma establishes that under such assumptions, the error on the gradient (i.e.,  $  \lVert \bm e^{(n)}[t] \rVert_2 $) is always bounded.

\begin{mylemma} 
\label{lemma:e}Under assumptions A\ref{as:boundedprocess} and A\ref{as:errorbounds}, let $\{\tbm a_n[t]\}_{t=P}^{T}$ be generated by JSTIRSO (\textbf{Procedure~2}) with
		\myitem \cmt{constant step size}a constant step size $\alpha \in  (0,1/L]$. Then, the error associated with the inexact gradient [cf. \eqref{eq:erroringrad}] is bounded  as $ \lVert \bm e^{(n)}[t] \lVert_2  \leq B_{\bm e}$, where
\begin{multline}
B_{\bm e} \define \left (\gamma B_{\bm\Phi}  + \left(\frac{\nu}{1+\nu}\right)\left(2 \sqrt{PN\energybound} B_{\bm g}+B_{\bm g}^2\right) \right )\\
 \times \frac{\sqrt{PN} \energybound}{\strongcvxparf }\left (\frac{\nu}{1+\nu}+\frac{1}{1-\gamma }\right)
 + \gamma B_{\bm r}+ \left(\frac{\nu}{1+\nu}\right) B_{\bm g} \sqrt{\energybound}.
\end{multline}
\end{mylemma}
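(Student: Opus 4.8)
The plan is to bound $\lVert\bm e^{(n)}[t]\rVert_2$ by writing the error as the difference between the inexact gradient \eqref{eq:gradfJSTIRSO} and its exact counterpart, and then controlling each resulting term with the error bounds of A\ref{as:errorbounds} and the magnitude bounds of A\ref{as:boundedprocess}. First I would note that $\nabla\tilde\rLoss_t^{(n)\,\text{true}}$ has the same algebraic form as \eqref{eq:gradfJSTIRSO}, with $\hbm g[t]$, $\hbm\Phi[t-1]$, $\hbm r_n[t-1]$ replaced by the true quantities $\bm g[t]$, $\bm\Phi[t-1]$, $\bm r_n[t-1]$ (this follows by repeating the derivation of Sec.~\ref{sec:tirso} with the true signals in place of the reconstructed ones). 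Subtracting the two gradients then gives
\begin{multline*}
\bm e^{(n)}[t]= U_n[t]\big[(\hbm g[t]\hbm g^\top[t]-\bm g[t]\bm g^\top[t])\tbm a_n[t]-\tilde y_n[t](\hbm g[t]-\bm g[t])\big]\\
+\gamma(\hbm\Phi[t-1]-\bm\Phi[t-1])\tbm a_n[t]-\gamma(\hbm r_n[t-1]-\bm r_n[t-1]).
\end{multline*}

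Next I would apply the triangle inequality and submultiplicativity to split $\lVert\bm e^{(n)}[t]\rVert_2$ into four contributions. The rank-one difference is handled by writing $\hbm g[t]=\bm g[t]+(\hbm g[t]-\bm g[t])$ and expanding, which yields $\lVert\hbm g[t]\hbm g^\top[t]-\bm g[t]\bm g^\top[t]\rVert_2\le 2\lVert\bm g[t]\rVert_2\lVert\hbm g[t]-\bm g[t]\rVert_2+\lVert\hbm g[t]-\bm g[t]\rVert_2^2\le 2\sqrt{PN\energybound}\,B_{\bm g}+B_{\bm g}^2$, using $\lVert\bm g[t]\rVert_2\le\sqrt{PN\energybound}$ (A\ref{as:boundedprocess}) and $\lVert\hbm g[t]-\bm g[t]\rVert_2\le B_{\bm g}$ (A\ref{as:errorbounds}). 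The remaining factors are bounded directly by $U_n[t]\le\nu/(1+\nu)$ (the observation after \eqref{eq:u}), $|\tilde y_n[t]|\le\sqrt{\energybound}$ (A\ref{as:boundedprocess}), $\lVert\hbm\Phi[t-1]-\bm\Phi[t-1]\rVert_2\le B_{\bm\Phi}$ and $\lVert\hbm r_n[t-1]-\bm r_n[t-1]\rVert_2\le B_{\bm r}$ (A\ref{as:errorbounds}). Collecting terms, the noise-and-reconstruction constants combine into $\tfrac{\nu}{1+\nu}B_{\bm g}\sqrt{\energybound}+\gamma B_{\bm r}$, while the two terms proportional to the iterate gather the coefficient $\gamma B_{\bm\Phi}+\tfrac{\nu}{1+\nu}\big(2\sqrt{PN\energybound}\,B_{\bm g}+B_{\bm g}^2\big)$, exactly reproducing the structure of $B_{\bm e}$.

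The main obstacle is the uniform-in-time bound on the iterate norm $\lVert\tbm a_n[t]\rVert_2$, which must equal $\tfrac{\sqrt{PN}\energybound}{\strongcvxparf}\big(\tfrac{\nu}{1+\nu}+\tfrac{1}{1-\gamma}\big)$ for the stated $B_{\bm e}$ to emerge. I would obtain this from the strong convexity granted by A\ref{as:mineig}: since $\lambda_{\mathrm{min}}(\hbm\Phi[t])\ge\strongcvxparf$, the same argument already used in the proof of Lemma~\ref{lemma:boundedgrad} (Appendix~A) bounds the JSTIRSO iterate, combining $\lVert\hbm r_n[t]\rVert_2\le\sqrt{PN}\energybound/(1-\gamma)$ and $U_n[t]|\tilde y_n[t]|\lVert\hbm g[t]\rVert_2\le\tfrac{\nu}{1+\nu}\sqrt{PN}\energybound$ with the strong-convexity modulus. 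The delicate point is that this bound must hold for \emph{every} $t\ge P$, so that $\lVert\bm e^{(n)}[t]\rVert_2$ is controlled uniformly; substituting it into the iterate coefficient above and adding the two constant terms then produces $B_{\bm e}$, completing the proof. This uniform bound is exactly what later allows $E^{(n)}[T]$ in Theorem~\ref{th:dynamicregretboundTIRSO} to be controlled, since $E^{(n)}[T]\le (T-P+1)B_{\bm e}$.
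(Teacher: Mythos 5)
Your proposal follows the paper's proof essentially step for step: the same decomposition of $\bm e^{(n)}[t]$ into the four difference terms, the same bound $\lambda_{\mathrm{max}}\big(\hbm g[t]\hbm g^\top[t]-\bm g[t]\bm g^\top[t]\big)\le 2\sqrt{PN\energybound}\,B_{\bm g}+B_{\bm g}^2$ via A\ref{as:boundedprocess} and \eqref{eq:bg}, the same constants $U_n[t]\le \nu/(1+\nu)$ and $|\tilde y_n[t]|\le\sqrt{\energybound}$, and the same reuse of the uniform iterate bound \eqref{eq:bound_a} from the proof of Lemma~\ref{lemma:boundedgrad}. Your explicit remark that this iterate bound rests on the strong-convexity assumption A\ref{as:mineig} (and on the step size $\alpha\le 1/L$, hence A\ref{as:maxeig}) is correct and in fact slightly more careful than the lemma statement itself, which lists only A\ref{as:boundedprocess} and A\ref{as:errorbounds}.
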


\begin{proof}
See Appendix B in the supplementary material.
\end{proof}

This bound depends on three kinds of quantities: a) bounds related to the inexactness of the reconstructed signal, b) simple properties of the data time series, and c) the hyperparameters $\nu$ and $\gamma$. Note that $\lVert \bm e^{(n)}[t] \rVert_2$ and $E^{(n)}[T]$ are related via \eqref{eq:cumulativeerror}. In those cases where the sources of uncertainty are such that $\lVert \bm e^{(n)}[t] \rVert_2$ does not vanish, the above bound can be used to replace $E^{(n)}[T]$ in the regret bound in \eqref{eq:regretboundTIRSO} with an expression that depends on the quantities expressed in A\ref{as:errorbounds}. 
\begin{mycorollary}
Under the hypotheses in Theorem 1, the dynamic regret of JSTIRSO satisfies: 
\vspace{-2mm}
\begin{multline}\label{eq:regretboundTIRSOcorollary}
	\tilde R_d^{(n)}[T] \leq \frac{1}{\alpha \strongcvxparf}\Big[B_{\bm v} +\lambda \sqrt{N}\Big]  \big(\lVert \tbm a_n[P]- \timevarhindsight[P] \rVert_2 + W^{(n)}[T] \\
	+ \alpha T B_{\bm e} \big).
	\end{multline}
\end{mycorollary}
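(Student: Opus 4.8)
The plan is to obtain this corollary directly from Theorem~\ref{th:dynamicregretboundTIRSO} by replacing the cumulative gradient error $E^{(n)}[T]$ with the explicit bound furnished by Lemma~\ref{lemma:e}. Since the hypotheses of the corollary coincide with those of Theorem~1, the regret bound \eqref{eq:regretboundTIRSO} is immediately available, and the only remaining task is to upper-bound $E^{(n)}[T]$ in a way that exposes the dependence on the inexactness quantities introduced in A\ref{as:errorbounds}.

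First I would recall the definition \eqref{eq:cumulativeerror}, namely $E^{(n)}[T] = \sum_{t=P}^{T} \lVert \bm e^{(n)}[t] \rVert_2$. Lemma~\ref{lemma:e} guarantees the uniform bound $\lVert \bm e^{(n)}[t] \rVert_2 \leq B_{\bm e}$ for every $t \geq P$, so each of the $T-P+1$ summands is at most $B_{\bm e}$. This yields $E^{(n)}[T] \leq (T-P+1)\, B_{\bm e} \leq T B_{\bm e}$, where the last inequality uses $P \geq 1$ (the VAR order is always at least one), which permits stating the cleaner factor $T$ in \eqref{eq:regretboundTIRSOcorollary}.

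Finally I would substitute this estimate into \eqref{eq:regretboundTIRSO}. Because the prefactor $\tfrac{1}{\alpha \strongcvxparf}\big[B_{\bm v} + \lambda \sqrt{N}\big]$ is nonnegative, and the right-hand side of \eqref{eq:regretboundTIRSO} is monotonically increasing in $E^{(n)}[T]$, replacing the term $\alpha E^{(n)}[T]$ by its upper bound $\alpha T B_{\bm e}$ preserves the inequality and produces exactly \eqref{eq:regretboundTIRSOcorollary}. No genuine obstacle arises in this step: the entire technical weight of the corollary rests on Lemma~\ref{lemma:e} (established in Appendix~B), and what remains here is simply the bounding of a finite sum of uniformly bounded terms together with the monotonicity observation. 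The only point requiring minor care is the index bookkeeping in \eqref{eq:cumulativeerror}, where one checks that the $T-P+1$ summands can be coarsened to $T$ summands without violating the bound.
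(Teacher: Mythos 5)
Your proposal is correct and follows essentially the same route the paper takes (the paper leaves the corollary's proof implicit, stating only that the uniform bound of Lemma~3 ``can be used to replace $E^{(n)}[T]$'' in Theorem~1): one substitutes $\lVert \bm e^{(n)}[t]\rVert_2 \leq B_{\bm e}$ into $E^{(n)}[T]=\sum_{t=P}^{T}\lVert \bm e^{(n)}[t]\rVert_2$ to get $E^{(n)}[T]\leq (T-P+1)B_{\bm e}\leq T B_{\bm e}$ and then uses monotonicity of the regret bound in $E^{(n)}[T]$. Your explicit check that the coarsening $T-P+1\leq T$ relies on $P\geq 1$, and that the hypotheses of Theorem~1 subsume those of Lemma~3, is careful bookkeeping the paper glosses over but changes nothing substantive.
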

Observe that the above regret bound has a term that is linear in $T$, and this case was commented after Theorem \ref{th:dynamicregretboundTIRSO}. If $W^{(n)}[T]$ is sublinear, then the \textcolor{black}{asymptotic growth rate of the dynamic regret is bounded by $ (B_{\bm v} + \lambda \sqrt{N}) B_{\bm e}/\beta_\ell$. Note that this is a worst-case bound that does not depend on the stepsize $\alpha$.}
\end{myitemize}

Intuitively, dynamic regret characterizes the ability to predict the next signal observation from the estimated parameters and reconstructed signals. A remaining challenge is to determine under which conditions the algorithms are able to identify parameters and signals. This is important because, under identifiability conditions, one could claim that the lower the regret bound is, the closer the reconstructed signals will be to the true signals. Consequently, apart from obtaining a smaller value of $B_{\bm g}$, also $\{\hbm \Phi[t], \hbm r_n[t]\}$ will become closer to the (not observable) $\{\bm \Phi[t], \bm r_n[t]\}$, which will be associated with smaller values of the quantities $B_{\bm\Phi}, B_{\bm r}$. The dependency of these bounds on regret and the interaction between such bounds are topics that lie out of the scope of the present work and could give rise to improved regret bounds.

\section{Experimental Results}\label{sec:simulationsE}
To analyze the performance of the proposed algorithms, we evaluate both the prediction normalized mean squared error (NMSE) for the signal, which is given by:
	\begin{myitemize}%
		\myitem \cmt{NMSD Signal}
		\begin{equation} \label{eq:signalNMSD}
		\text{NMSD}_s[t] =\frac{\mathbb{E}\big[\left  \lVert \bm y[t] - \hbm y[t] \right \rVert_2^2 \big]}{\mathbb{E}\left [\lVert \bm y[t]\lVert_2^2\right ]},
		\end{equation}
		where $\bm y[t]$ is the true signal while $\hbm y[t]$ is the predicted signal; as well as the performance of the topology estimation, which is evaluated by the topology normalized mean squared deviation (NMSD).   
		%estimated signal from the noisy observations with missing values. 
		\myitem \cmt{NMSD}The NMSD for the graph (topology) is defined as:
		\begin{equation} \label{eq:nmsd}
		\text{NMSD}_{g}[t] \define  \frac{\mathbb{E}\big [ \sum_{n=1}^N \lVert  {\bm{a}}_n[t] - \bm{a}_n^\text{true}(t) \lVert_2^2\big ]}{\mathbb{E}\big [ \sum_{n=1}^N\lVert\bm{a}_n^\text{true}(t) \lVert_2^2\big]},
		\end{equation}
		which measures the difference between the estimates $\{\bm a_n[t]\}_{t}$ and the time-varying true VAR coefficients $\{\bm a_n^\text{true}(t)\}_{t}$.  
		\vspace{-2mm}
		\subsection{Synthetic Data}
		\subsubsection{Data generation}
  \label{ss:data_generation}
		\myitem \cmt{Network via Erd\H{o}s-R\'enyi model}We consider a dynamic VAR model where the coefficients change abruptly at two specific points in time. 
		To generate the synthetic data, an
		Erd\H{o}s-R\'enyi random graph is generated with edge probability $p_e$ and
		self-loop probability 1.
		This random graph underlies the data generation and its binary adjacency matrix	
		determines which entries of
		the matrices $\{\bm A^{(t)}_p\}_{p=0}^P$ are zero for all $t$. 
		\myitem \cmt{VAR coefficients from Gaussian distribution}The rest of the entries are drawn i.i.d. from a standard normal distribution.
		Each of the matrices $\{\{\bm
		A^{(t)}_p\}_{p=0}^P\}_{t=1}^T$ is then scaled down by a constant
		that ensures that the VAR process is stable \cite{lutkepohl2005}.
		\myitem \cmt{Time series data generation}The
		innovation process samples are drawn independently as
		$\bm{u}[t]\sim\mathcal{N}(\bm{0}_N, \sigma_u^2\bm
		I_{N})$. At $t=T/3$ and $t=2T/3$, the model changes abruptly from one model to another model, by generating at each changepoint a new set of VAR coefficients with the appropriate support (adjacency matrix). Changes in the adjacency matrix are simulated as follows: at each transition, the adjacency matrix is also changed by altering 33\% of the edges. This means that 1/3 of the edges are removed and new edges are introduced with probability $p_e/3$.
	\end{myitemize}
\setcounter{figure}{1}
	\begin{figure}
		\centering
		\includegraphics[width=\linewidth]{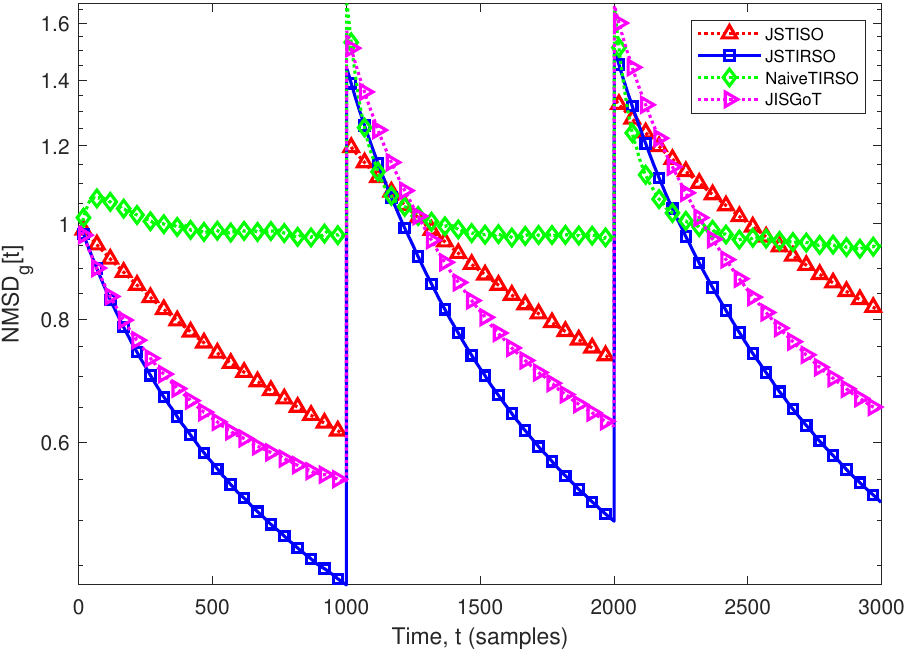}
		\caption{NMSD vs. time,  Simulation parameters: $N=10, P=4, T=3000, \sigma_u=0.01, \sigma_\epsilon=0.01, \gamma=0.8, \rho=0.75, p_e=0.1, \alpha=\zeta/L, \zeta \in (0,1]$, number of Monte Carlo (MC) iterations = 300, JISGoT iterations = 20.}
		\label{fig:nmsd-topology}
	\end{figure}
	\begin{figure}
		\centering
		\includegraphics[width=\linewidth]{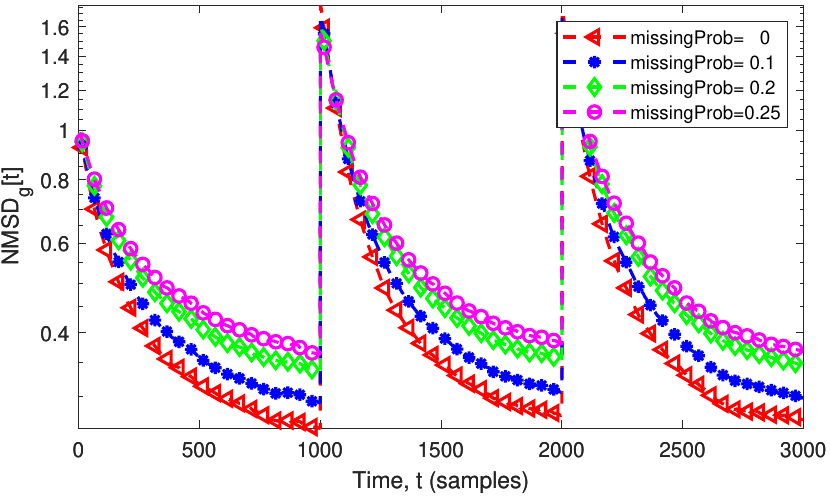}
		\caption{NMSD of JSTIRSO for topology estimation for various missing prob. vs. time. Simulation parameters: $N=8, P=3, T=3000, p_e=0.2, \sigma_u=0.01, \sigma_\epsilon=0.01, \gamma=0.9, \alpha=\zeta/L, \zeta \in (0,1]$, number of MC iterations = 100.}
		\label{fig:nmsd-topology-multiple}
	\end{figure}
		\begin{figure}
		\centering
		\includegraphics[width=\linewidth]{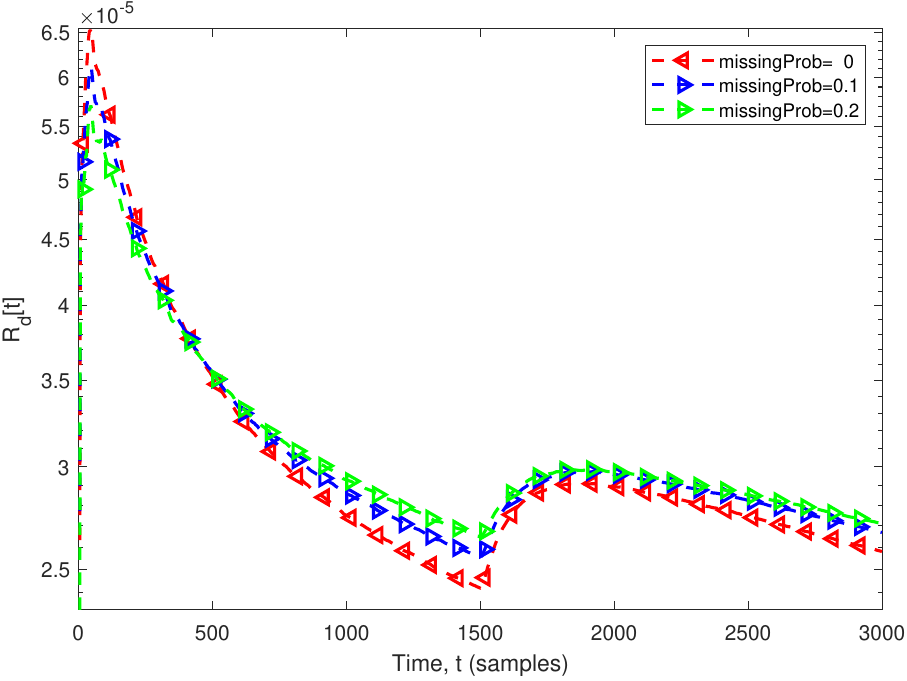}
		\caption{Cumulative dynamic regret (normalized by the number of instants for which there are no missing values), given by $R_d[T]= \sum_{n=1}^N \tilde R_d^{(n)}[T]/ \sum_{n=1}^N \sum_{t=P}^T \mathds{1} \{m_n[t]=1\}$ of JSTIRSO for topology estimation for various missing probabilities vs. time. Simulation parameters: $N=8, P=3, T=3000, p_e=0.2, \sigma_u=0.01, \sigma_\epsilon=0, \gamma=0.9, \alpha=\zeta/L, \zeta \in (0,1]$, number of MC iterations = 30.}
		\label{fig:dynamicregret}
	\end{figure}
	%\vspace{-2mm}
	\subsubsection{Competing Alternatives}
	The performance of JSTISO (Procedure \ref{alg:JSTISO}) and JSTIRSO (Procedure \ref{alg:JSTIRSO}) is evaluated and compared with that corresponding to two competing alternatives. %Here, the synthetic data is generated as mentioned above, where the adjacency matrix is also changed. 
    The first alternative to our algorithm is a simple procedure based on TIRSO \cite{zaman2019online}, where the missing values are imputed directly as their predicted values via the VAR model \eqref{eq:model}, and the noisy samples are not refined: this procedure is referred to as `NaiveTIRSO'. The second alternative is an adaptation of the  JISGoT algorithm~\cite[Algorithm 4]{ioannidis2019semiblindinference}, which is, to the best of our knowledge, the state-of-art in joint signal and topology estimation. The JISGoT algorithm refines the previous $P$ signal estimates and runs several iterations at each time instant, incurring a computational complexity of $\mathcal{O}(KN^3P^2)$ per time instant, where $K$ is the number of iterations used in the inner loop that refines the signal estimates.
	%for the signal and the topology estimation. 
	The values for the parameters $\alpha,~ \nu, ~\gamma,$ and $\lambda$  in JSTISO, JSTIRSO, and JISGoT are selected via grid search to minimize the squared deviation for a validation signal.
	%\vspace{-2mm}
	\subsubsection{Discussion of results}In Fig. \ref{fig:nmsd-topology}, the NMSD for the topology estimation [cf. \eqref{eq:nmsd}] is presented for the four algorithms described above. The input data are generated as described in Sec.~\ref{ss:data_generation} where the underlying adjacency matrix changes at $t=T/3$ and $t=2T/3$. 
	The NMSD obtained by NaiveTIRSO saturates near 1. JSTISO tracks the topology more slowly than JSTIRSO or JISGoT, as expected since JSTISO disregards the past completely. JSTIRSO achieves a lower NMSD eventually as compared to JISGoT. Notice also that JSTIRSO requires less computation than JISGoT. As expected, due to the careful choice of the loss function in JSTIRSO, it attains a lower level of NMSD$_g$ than that of JISGoT, despite JSTIRSO does not refine the previous signal estimates.
	
	Fig. \ref{fig:nmsd-topology-multiple} presents a comparison of JSTIRSO applied to different data sets generated using various missing probabilities in the observations. The performance of JSTIRSO for multiple values of missing probability is compared with that of zero missing probability. As expected, the figure shows that the higher the missing probability in the observations is, the higher the NMSD for the graph estimation is. In Fig. \ref{fig:dynamicregret}, for the sake of simplicity in the illustration, a transition point at $T/2$ is considered, and the cumulative normalized dynamic regret (normalized by the number of instants when the value is not missing), given by $R_d[T]= \sum_{n=1}^N \tilde R_d^{(n)}[T]/ \sum_{n=1}^N \sum_{t=P}^T \mathds{1} \{m_n[t]=1\}$, is presented for JSTIRSO for different values of missing probabilities. As expected, the result in Fig. \ref{fig:dynamicregret} shows that when the missing probability value is decreased, the cumulative normalized dynamic regret has also lower values. The same trend is followed after the transition point at $T/2$.
	\vspace{-2mm}
\subsection{Real Data}
In this section, we present the results obtained using real data. The real data are taken from Lundin’s offshore oil and gas (O\&G) platform Edvard-Grieg.\footnote{https://www.lundin-petroleum.com/operations/production/norway-edvard-grieg} We use a dataset containing 24-time
series corresponding to the main 24 variables of the decantation system that separates oil, gas, and water. Each node corresponds to a temperature, pressure, or oil-level sensor placed in the aforementioned subsystem. Causal relations among these time series are expected since they are physically coupled due to the pipelines connecting the various system parts, and due to the inherent control systems therein. Topology identification is motivated to predict the short-term future values of the time series corresponding to temperature, pressure, and oil-level sensors and to unveil dependencies that cannot be inferred by simple human inspection. All time series are re-sampled having a common sampling period using linear interpolation.
Each time series is also normalized to have zero mean and unit sample standard deviation.

The results in Fig. \ref{fig:real-data-NMSE} represent the performance of JSTIRSO for different missing probabilities of observation in the real data, by presenting the prediction NMSE for each case. First, the hyperparameters of JSTIRSO are cross-validated via grid search. Then, using the cross-validated hyperparameters, the prediction NMSE versus time is plotted corresponding to each value of missing probability in the observations. The missing values are synthetically introduced to the real data using the model in \eqref{eq:observationmodel}. The results show that when the missing probability increases, the prediction NMSE of JSTIRSO also increases, as expected. 

Fig. 6 (in the supplementary material) 
displays the average graphs estimated via JSTIRSO by thresholding the average of the estimated VAR coefficients across the intervals $[k/(3T), (k+1)/(3T)], k=0,1,3$ for missing probabilities $0$, $0.05$, $0.15$ and $0.2$. One can observe that the average estimated graph changes with time since the underlying system is dynamic. Moreover, the results illustrate that the proposed algorithm JSTIRSO is robust and can estimate the graph when there are some missing values in the observations. 
	
	\begin{figure}
		\centering
		\includegraphics[width=\linewidth]{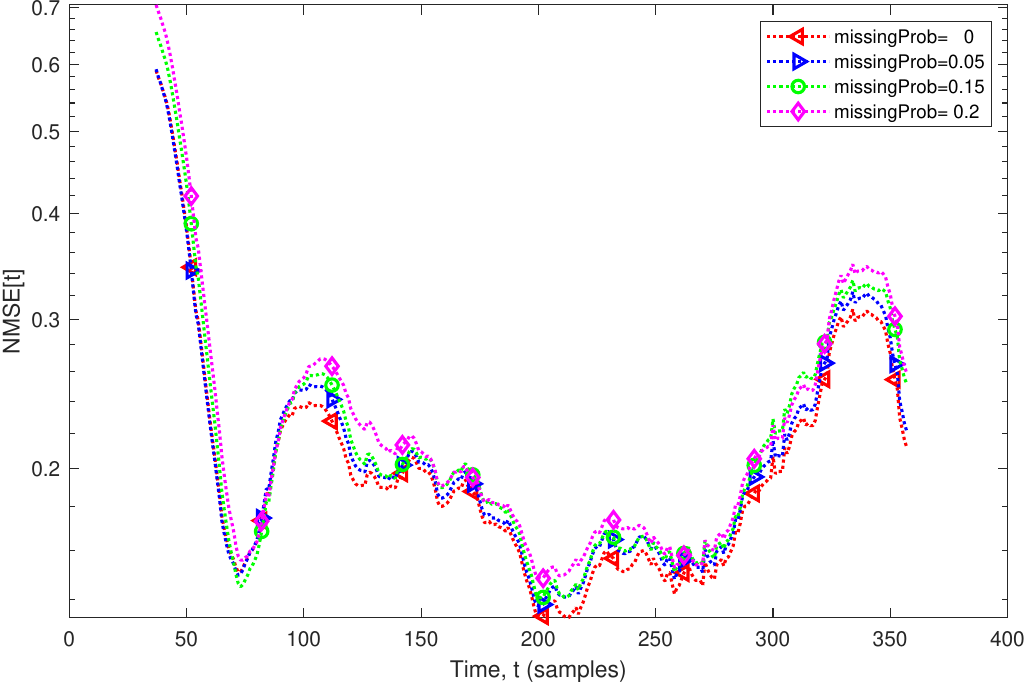}
		\caption{Prediction NMSE for JSTIRSO vs. time using real data from Lundin. Simulation parameters: $N=24, P=6, T=360, \sigma_u=0.01, \sigma_\epsilon=0.01, \gamma=0.9$.}
		\label{fig:real-data-NMSE}
	\end{figure}
\nextver{	
	\begin{figure*}
		\centering
		\includegraphics[width=\textwidth,height=24cm]{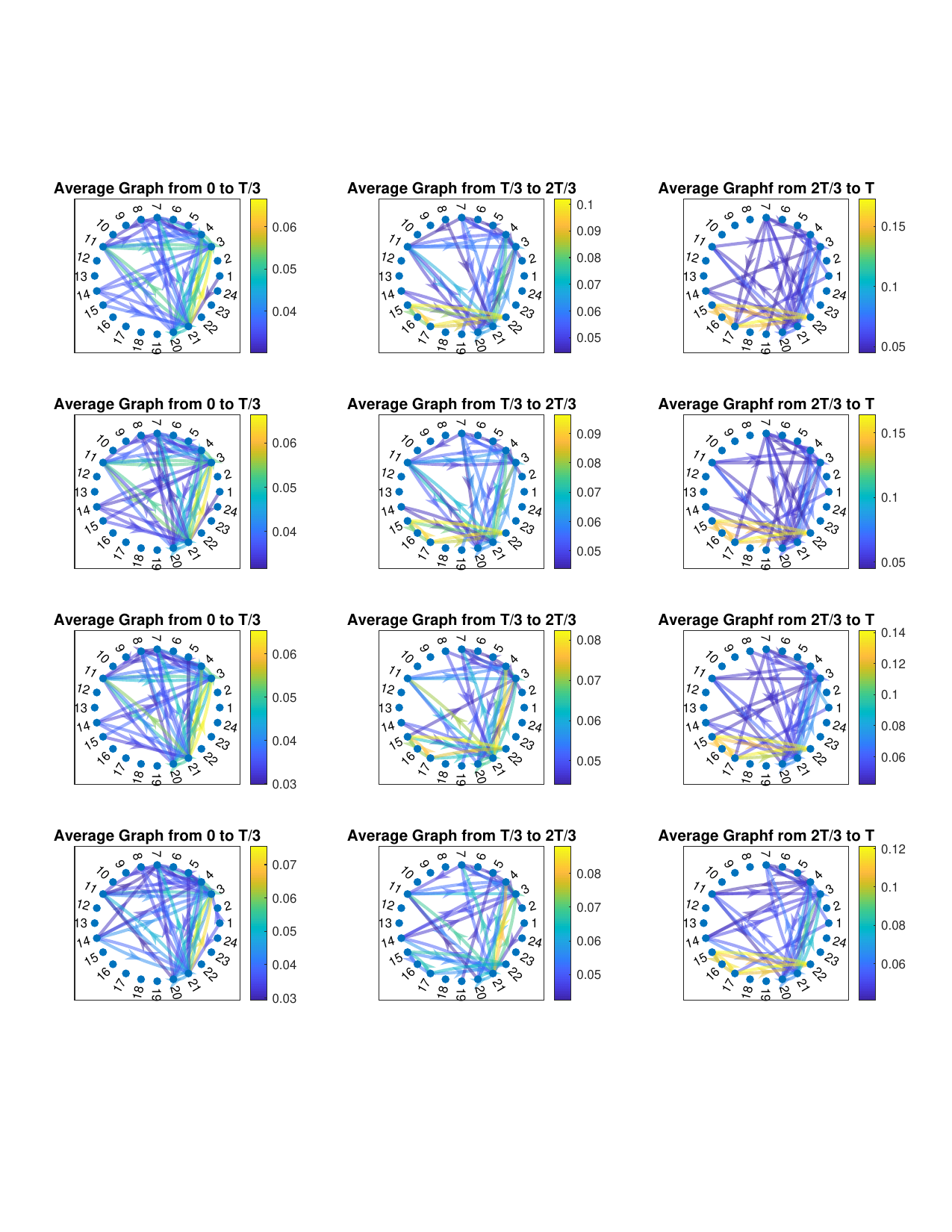}
		\caption{Average estimated graphs via JSTIRSO. The rows of the sub-figures correspond to missing probabilities 0, 0.05, 0.15, and 0.2. respectively. All the hyperparameters of the algorithm are computed via grid search.  Simulation parameters: $N=24, P=6, T=360, \sigma_u=0.01, \sigma_\epsilon=0.01, \gamma=0.9$.}
		\label{fig:real-data-avgGraphs}
	\end{figure*}
	}
	\section{Conclusions}
	\label{sec:conclusions}
	To track time-varying topologies from noisy observations in the presence of missing data, the online algorithm JSTIRSO has been proposed by minimizing an online joint optimization criterion. 
	Thanks to a carefully formulated loss function, joint signal and topology estimation can be carried out efficiently (especially in the case of the low-complexity JSTISO); moreover, the performance of JSTIRSO has been characterized theoretically. To this end, a dynamic regret bound has been derived as a function of the path length (which quantifies the variation in the topologies) and cumulative error on the gradient (which quantifies the effect of noise and missing values). The error on the gradient is in turn bounded [cf. Lemma \ref{lemma:e}] as a function of the maximum deviation of the (inexact) estimate of the auxiliary variables %$ \hbm g, \hbm \Phi[t], \hbm r[t] $ 
    from their associated true values. The bound on the dynamic regret becomes sublinear in scenarios where the variation in the time-varying topologies, the probability of missing data, and the observation noise level are vanishing with time. Numerical results have shown that JSTIRSO can track the time-varying topologies from noisy observations with missing values with a smaller deviation than (state-of-the-art) JISGoT. Future research avenues include the combination of the proposed strategy with tools related to Kalman filtering and smoothing \cite{ioannidis2019semiblindinference} to ascertain its improvement in terms of performance.

	%%%%%%%%%%%%% DO NOT MODIFY  %%%%%%%%%%%%%%%%%%%%%%%%%%%%%%
	\if\editmode1 
		\onecolumn
		\printbibliography
	\else
		\bibliography{\bibfilenames}

\begin{thebibliography}{10}

\bibitem{kolaczyck2009}
E.~D. Kolaczyk,
\newblock {\em Statistical Analysis of Network Data: Methods and Models},
\newblock Springer, New York, 2009.

\bibitem{isufi2018forecasting}
E.~Isufi, A.~Loukas, N.~Perraudin, and G.~Leus,
\newblock ``Forecasting time series with varma recursions on graphs,''
\newblock {\em IEEE Trans. Signal Process.}, vol. 67, no. 18, pp. 4870--4885, 2019.

\bibitem{lorenzo2016lms}
P.~D. Lorenzo, S.~Barbarossa, P.~Banelli, and S.~Sardellitti,
\newblock ``Adaptive least mean squares estimation of graph signals,''
\newblock {\em IEEE Trans. Signal Info. Process. Netw.}, vol. 2, no. 4, pp. 555--568, Dec. 2016.

\bibitem{liu2016unsupervised}
C.~Liu, S.~Ghosal, Z.~Jiang, and S.~Sarkar,
\newblock ``An unsupervised spatiotemporal graphical modeling approach to anomaly detection in distributed {CPS},''
\newblock in {\em ACM/IEEE Int. Conf. Cyber-Physical Syst.}, Apr. 2016, pp. 1--10.

\bibitem{shen2017dimensionalityreduction}
Y.~Shen, P.~A. Traganitis, and G.~B. Giannakis,
\newblock ``Nonlinear dimensionality reduction on graphs,''
\newblock in {\em Proc. IEEE Int. Workshop Comput. Advan. Multi-Sensor Adapt. Process.}, Curacao, Netherlands Antilles, Dec. 2017.

\bibitem{hoeltgebaum2021estimation}
H.~Hoeltgebaum, N.~Adams, and C.~Fernandes,
\newblock ``Estimation, forecasting, and anomaly detection for nonstationary streams using adaptive estimation,''
\newblock {\em IEEE Trans. Cybern.}, vol. 52, no. 8, pp. 7956--7967, 2021.

\bibitem{little2014}
R.~J.~A. Little and D.~B. Rubin,
\newblock {\em Statistical Analysis with Missing Data},
\newblock John Wiley \& Sons, Inc., USA, 2014.

\bibitem{pavez2019missingdata}
E.~Pavez and A.~Ortega,
\newblock ``Covariance matrix estimation with non uniform and data dependent missing observations,''
\newblock {\em IEEE Trans. Inf. Theory}, vol. 67, no. 2, pp. 1201--1215, 2020.

\bibitem{humbert2009better}
J.~Y. Humbert, L.~S. Mills, J.~S. Horne, and B.~Dennis,
\newblock ``A better way to estimate population trends,''
\newblock {\em Oikos}, vol. 118, no. 12, pp. 1940--1946, 2009.

\bibitem{clark2004population}
J.~S. Clark and O.~N. Bj{\o}rnstad,
\newblock ``Population time series: process variability, observation errors, missing values, lags, and hidden states,''
\newblock {\em Ecology}, vol. 85, no. 11, pp. 3140--3150, 2004.

\bibitem{harvey1984estimating}
A.~C. Harvey and R.~G. Pierse,
\newblock ``Estimating missing observations in economic time series,''
\newblock {\em Journal of the American Statistical Association}, vol. 79, no. 385, pp. 125--131, 1984.

\bibitem{little2019statistical}
R.~J.~A. Little and D.~B. Rubin,
\newblock {\em Statistical Analysis with Missing Data}, vol. 793,
\newblock John Wiley \& Sons, 2019.

\bibitem{grover2021MissingDataSpatioTemporal}
A.~Grover and B.~Lall,
\newblock ``A recursive method for estimating missing data in spatio-temporal applications,''
\newblock {\em IEEE Trans. Ind. Inform.}, vol. 18, no. 4, pp. 2714--2723, 2021.

\bibitem{adhikari2022missingdataIoT}
D.~Adhikari, W.~Jiang, J.~Zhan, D.~B. Rawat, U.~Aickelin, and H.~A. Khorshidi,
\newblock ``A comprehensive survey on imputation of missing data in internet of things,''
\newblock {\em ACM Comput. Surveys}, vol. 55, no. 7, pp. 1--38, 2022.

\bibitem{kim2022missingvaluesship}
Y.~Kim, S.~Steen, and H.~Muri,
\newblock ``A novel method for estimating missing values in ship principal data,''
\newblock {\em Ocean Eng.}, vol. 251, pp. 110979, 2022.

\bibitem{zhang2022missingenvironmental}
Y.~Zhang and P.~J. Thorburn,
\newblock ``Handling missing data in near real-time environmental monitoring: A system and a review of selected methods,''
\newblock {\em Future Generation Computer Systems}, vol. 128, pp. 63--72, 2022.

\bibitem{pan2022imputation}
Z.~Pan, Y.~Wang, K.~Wang, H.~Chen, C.~Yang, and W.~Gui,
\newblock ``Imputation of missing values in time series using an adaptive-learned median-filled deep autoencoder,''
\newblock {\em IEEE Trans. Cybern.}, vol. 53, no. 2, pp. 695--706, 2022.

\bibitem{karimi2021joint}
H.~S. Karimi and B.~Natarajan,
\newblock ``Joint topology identification and state estimation in unobservable distribution grids,''
\newblock {\em IEEE Trans. on Smart Grid}, vol. 12, no. 6, pp. 5299--5309, 2021.

\bibitem{mateos2018connecting}
G.~Mateos, S.~Segarra, A.~G. Marques, and A.~Ribeiro,
\newblock ``Connecting the dots: Identifying network structure via graph signal processing,''
\newblock {\em IEEE Signal Process. Mag.}, vol. 36, no. 3, pp. 16--43, 2019.

\bibitem{angelosante2011graphical}
D.~Angelosante and G.~B. Giannakis,
\newblock ``Sparse graphical modeling of piecewise-stationary time series,''
\newblock in {\em Proc. IEEE Int. Conf. Acoust., Speech, Signal Process.}, Prague, Czech Republic, 2011, pp. 1960--1963.

\bibitem{segarra2017templates}
S.~Segarra, A.~G. Marques, G.~Mateos, and A.~Ribeiro,
\newblock ``Network topology inference from spectral templates,''
\newblock {\em IEEE Trans. Signal Info. Process. Netw.}, vol. 3, no. 3, pp. 467--483, Sep. 2017.

\bibitem{kline2015}
R.~B. Kline,
\newblock {\em Principles and Practice of Structural Equation Modeling},
\newblock Guilford Publications, 2015.

\bibitem{shen2017tensor}
Y.~Shen, B.~Baingana, and G.~B. Giannakis,
\newblock ``Tensor decompositions for identifying directed graph topologies and tracking dynamic networks,''
\newblock {\em IEEE Trans. Signal Process.}, vol. 65, no. 14, pp. 3675--3687, Jul. 2017.

\bibitem{bishop2006}
C.~M. Bishop,
\newblock {\em Pattern Recognition and Machine Learning},
\newblock Information Science and Statistics. Springer, 2006.

\bibitem{granger1988causality}
C.~W.~J. Granger,
\newblock ``Some recent development in a concept of causality,''
\newblock {\em J. Econometrics}, vol. 39, no. 1-2, pp. 199--211, Sep. 1988.

\bibitem{zellner1979causality}
A.~Zellner,
\newblock ``Causality and econometrics,''
\newblock in {\em Carnegie-Rochester Conference series on Public Policy}. Elsevier, 1979, vol.~10, pp. 9--54.

\bibitem{kay1}
S.~M. Kay,
\newblock {\em Fundamentals of Statistical Signal Processing, {V}ol. {I}: Estimation Theory},
\newblock Prentice-Hall, 1993.

\bibitem{goebel2003varcausality}
R.~Goebel, A.~Roebroeck, D.S. Kim, and E.~Formisano,
\newblock ``Investigating directed cortical interactions in time-resolved f{MRI} data using vector autoregressive modeling and {G}ranger causality mapping,''
\newblock {\em Magnet. Reson. Imag.}, vol. 21, no. 10, pp. 1251--1261, 2003.

\bibitem{basu2015granger}
S.~Basu, A.~Shojaie, and G.~Michailidis,
\newblock ``Network {G}ranger causality with inherent grouping structure.,''
\newblock {\em J. Mach. Learn. Res.}, vol. 16, no. 2, pp. 417--453, Mar. 2015.

\bibitem{bach2004learning}
F.~R. Bach and M.~I. Jordan,
\newblock ``Learning graphical models for stationary time series,''
\newblock {\em IEEE Trans. Signal Process.}, vol. 52, no. 8, pp. 2189--2199, Aug. 2004.

\bibitem{songsiri2010selection}
J.~Songsiri and L.~Vandenberghe,
\newblock ``Topology selection in graphical models of autoregressive processes,''
\newblock {\em J. Mach. Learn. Res.}, vol. 11, pp. 2671--2705, Oct. 2010.

\bibitem{bolstad2011groupsparse}
A.~Bolstad, B.~D.~Van Veen, and R.~Nowak,
\newblock ``Causal network inference via group sparse regularization,''
\newblock {\em IEEE Trans. Signal Process.}, vol. 59, no. 6, pp. 2628--2641, Jun. 2011.

\bibitem{songsiri2013vargranger}
J.~Songsiri,
\newblock ``Sparse autoregressive model estimation for learning {G}ranger causality in time series,''
\newblock in {\em Proc. IEEE Int. Conf. Acoust., Speech, Signal Process.}, Vancouver, BC, May 2013, pp. 3198--3202.

\bibitem{mei2017causal}
J.~Mei and J.~M.~F. Moura,
\newblock ``Signal processing on graphs: Causal modeling of unstructured data,''
\newblock {\em IEEE Trans. Signal Process.}, vol. 65, no. 8, pp. 2077--2092, Apr. 2017.

\bibitem{kolar2010estimating}
M.~Kolar, L.~Song, A.~Ahmed, and E.~P. Xing,
\newblock ``Estimating time-varying networks,''
\newblock {\em Ann. Appl. Statist}, pp. 94--123, 2010.

\bibitem{yamada2020timevarying}
K.~Yamada, Y.~Tanaka, and A.~Ortega,
\newblock ``Time-varying graph learning with constraints on graph temporal variation,''
\newblock {\em arXiv preprint arXiv:2001.03346}, 2020.

\bibitem{lopezramos2018dynamic}
L.~M. Lopez-Ramos, D.~Romero, B.~Zaman, and B.~Beferull-Lozano,
\newblock ``Dynamic network identification from non-stationary vector auto-regressive time series,''
\newblock in {\em Proc. IEEE Global Conf. Signal Inf. Process.}, Anaheim, CA, Nov. 2018, pp. 773--777.

\bibitem{yuan2021joint}
Y.~Yuan, D.~W. Soh, X.~Yang, K.~Guo, and T.~Q.~S. Quek,
\newblock ``Joint network topology inference via structured fusion regularization,''
\newblock {\em IEEE Trans. Knowl. Data Eng.}, 2023.

\bibitem{hallac2017network}
D.~Hallac, Y.~Park, S.~Boyd, and J.~Leskovec,
\newblock ``Network inference via the time-varying graphical lasso,''
\newblock in {\em Proc. ACM SIGKDD Int. Conf. Knowl. Discov. Data Min.}, 2017, pp. 205--213.

\bibitem{baingana2014trackingcascades}
B.~Baingana, G.~Mateos, and G.~B. Giannakis,
\newblock ``Proximal-gradient algorithms for tracking cascades over social networks,''
\newblock {\em IEEE J. Sel. Topics Signal Process.}, vol. 8, no. 4, pp. 563--575, Aug. 2014.

\bibitem{zaman2020dynamic}
B.~Zaman, L.~M. Lopez-Ramos, and B.~Beferull-Lozano,
\newblock ``Dynamic regret analysis for online tracking of time-varying structural equation model topologies,''
\newblock in {\em Proc. IEEE Conf. Ind. Electron. Appl. (ICIEA)}, 2020, pp. 939--944.

\bibitem{shafipour2019onlinetopology}
R.~{Shafipour}, A.~{Hashemi}, G.~{Mateos}, and H.~{Vikalo},
\newblock ``Online topology inference from streaming stationary graph signals,''
\newblock in {\em IEEE Data Sci. Workshop}, Jun. 2019, pp. 140--144.

\bibitem{zhang2021online}
X.~Zhang,
\newblock ``Online graph learning in dynamic environments,''
\newblock in {\em Proc. European Signal Process. Conf.}, Belgrade, Serbia, Oct. 2022, pp. 2151--2155.

\bibitem{jiang2021online}
Y.~Jiang, J.~Bigot, and S.~Maabout,
\newblock ``Online graph topology learning from matrix-valued time series,''
\newblock {\em arXiv preprint arXiv:2107.08020}, 2021.

\bibitem{shen2018online}
Y.~Shen and G.~B. Giannakis,
\newblock ``Online identification of directional graph topologies capturing dynamic and nonlinear dependencies,''
\newblock in {\em IEEE Data Sci. Workshop}, 2018, pp. 195--199.

\bibitem{liu2019smoothgraphlearning}
Y.~{Liu}, L.~{Yang}, G.~{Wenbin}, T.~{Peng}, and W.~{Wang},
\newblock ``Spatiotemporal smoothness-based graph learning method for sensor networks,''
\newblock in {\em Proc. IEEE Wireless Commun. Network. Conf.}, Marrakesh, Morocco, 2019, pp. 1--6.

\bibitem{berger2020efficient}
P.~Berger, G.~Hannak, and G.~Matz,
\newblock ``Efficient graph learning from noisy and incomplete data,''
\newblock {\em IEEE Trans. Signal Info. Process. Netw.}, vol. 6, pp. 105--119, 2020.

\bibitem{rao2017estimation}
M.~Rao, T.~Javidi, Y.~C. Eldar, and A.~Goldsmith,
\newblock ``Estimation in autoregressive processes with partial observations,''
\newblock in {\em Proc. IEEE Int. Conf. Acoust., Speech, Signal Process.}, New Orleans, LA, Jun. 2017, pp. 4212--4216.

\bibitem{loh2012missingdatavar}
P.~L. Loh and M.~J. Wainwright,
\newblock ``High-dimensional regression with noisy and missing data: Provable guarantees with nonconvexity,''
\newblock {\em The Annals of Statistics}, pp. 1637--1664, 2012.

\bibitem{coutino2021state}
M.~Coutino, E.~Isufi, T.~Maehara, and G.~Leus,
\newblock ``State-space based network topology identification,''
\newblock in {\em Proc. European Signal Process. Conf.}, Amsterdam, Netherlands, Dec. 2021, pp. 1055--1059.

\bibitem{jiang2020recovery}
J.~{Jiang}, D.~{Tay}, Q.~{Sun}, and S.~{Ouyang},
\newblock ``Recovery of time-varying graph signals via distributed algorithms on regularized problems,''
\newblock {\em IEEE Trans. Signal Info. Process. Netw.}, 2020.

\bibitem{anava2015online}
O.~Anava, E.~Hazan, and A.~Zeevi,
\newblock ``Online time series prediction with missing data,''
\newblock in {\em Proc. Int. Conf. Mach. Learn.}, Lille, France, 2015, pp. 2191--2199.

\bibitem{yang2019online}
H.~Yang and Q.~Pan, Z.and~Tao,
\newblock ``Online learning for time series prediction of ar model with missing data,''
\newblock {\em Neural Process. Lett.}, vol. 50, no. 3, pp. 2247--2263, 2019.

\bibitem{ioannidis2019semiblindinference}
V.~N. {Ioannidis}, Y.~{Shen}, and G.~B. {Giannakis},
\newblock ``Semi-blind inference of topologies and dynamical processes over dynamic graphs,''
\newblock {\em IEEE Trans. Signal Process.}, vol. 67, no. 9, pp. 2263--2274, May 2019.

\bibitem{zaman2019online}
B.~Zaman, L.~M.~Lopez Ramos, D.~Romero, and B.~Beferull-Lozano,
\newblock ``Online topology identification from vector autoregressive time series,''
\newblock {\em IEEE Trans. Signal Process.}, vol. 69, pp. 210--225, 2021.

\bibitem{lutkepohl2005}
H.~Lütkepohl,
\newblock {\em New Introduction to Multiple Time Series Analysis},
\newblock Springer, 2005.

\bibitem{kilian2017}
L.~Kilian and H.~Lütkepohl,
\newblock {\em Structural Vector Autoregressive Analysis},
\newblock Cambridge University Press, 2017.

\bibitem{geiger2015causalinferenceVAR}
P.~Geiger, K.~Zhang, B.~Schoelkopf, M.~Gong, and D.~Janzing,
\newblock ``Causal inference by identification of vector autoregressive processes with hidden components,''
\newblock in {\em Proc. Int. Conf. Mach. Learn.}, 2015, pp. 1917--1925.

\bibitem{dixit2019onlineproximal}
R.~{Dixit}, A.~S. {Bedi}, R.~{Tripathi}, and K.~{Rajawat},
\newblock ``Online learning with inexact proximal online gradient descent algorithms,''
\newblock {\em IEEE Trans. Signal Process.}, vol. 67, no. 5, pp. 1338--1352, Mar. 2019.

\bibitem{shalev2011online}
S.~Shalev-Shwartz,
\newblock ``Online learning and online convex optimization,''
\newblock {\em Found. Trends Mach. Learn.}, vol. 4, no. 2, pp. 107--194, 2011.

\bibitem{zinkevich2003online}
M.~Zinkevich,
\newblock ``Online convex programming and generalized infinitesimal gradient ascent,''
\newblock in {\em Proc. Int. Conf. Mach. Learn.}, 2003, pp. 928--936.

\bibitem{parikh2014proximal}
N.~Parikh and S.~Boyd,
\newblock ``Proximal algorithms,''
\newblock {\em Found. Trends Optim.}, vol. 1, no. 3, pp. 127--239, 2014.

\bibitem{beck2017}
A.~Beck,
\newblock {\em First-Order Methods in Optimization},
\newblock Society for Industrial and Applied Mathematics, 2017.

\end{thebibliography}
	\fi

 	\appendices

% \begin{figure}
%     \centering
%     \includegraphics[width=\columnwidth]{schematic_Alg.png}
%     \caption{A simplified flow diagram of JSTISO/JSTIRSO.}
%     \label{fig:Schematic}
% \end{figure}
	\section{Proof of  Lemma 1}	\label{sec:proof:lemmaboundedgrad}
	\setcounter{equation}{55}
To bound $\lVert  \nabla  \tilde \rLoss_t^{(n)}( \tbm a_n[t]) \rVert_2$, taking the norm on both sides of \eqref{eq:gradfJSTIRSO} and applying  the triangular inequality yields
%\begin{subequations} 
	\begin{align}
		&\left \lVert \nabla \tilde \rLoss_t^{(n)} (\tbm a_n[t]) \right \rVert_2 \nonumber \\&\leq \left \lVert U_n[t] \hbm g[t] \hbm g^\top [t] \bm a_n[t]\right \rVert_2 +\left \lVert\tilde U_n[t] \tilde y_n[t]  \hbm g[t]\right \rVert_2 \nonumber \\
		& \quad + \left \lVert \gamma \hbm \Phi[t-1] \bm a_n[t] \right \rVert_2 +\left \lVert \gamma \hbm r_n[t-1]\right \rVert_2 \nonumber \\ 
		& \leq U_n[t] \lambda_{\mathrm{max}}\left (\hbm g[t] \hbm g^\top [t]\right ) \left \lVert   \bm a_n[t]\right \rVert_2 +U_n[t]\left \lVert\tilde y_n[t]  \hbm g[t]\right \rVert_2
		\nonumber \\
		& \quad
		+  \gamma \lambda_{\mathrm{max}}\left(\hbm \Phi[t-1]\right ) \left \lVert \bm a_n[t] \right \rVert_2 +\left \lVert \gamma \hbm r_n[t-1]\right \rVert_2.
		\label{eq:generalgradbound} 
	\end{align}
%\end{subequations} 
Next, using assumptions A\ref{as:boundedprocess} and A\ref{as:errorbounds}, it can be easily shown that $\lambda_{\mathrm{max}}(\hbm g[t] \hbm g^\top [t]) \leq PN \energybound+ 2 \sqrt{PN \energybound} B_{\bm g} + B_{\bm g}^2$. Substituting this bound in the above expression and using assumption A\ref{as:maxeig} yields
\begin{align}\label{eq:boundGradJSTIRSO}
	&\left \lVert \nabla \tilde \rLoss_t^{(n)} (\tbm a_n[t]) \right \rVert_2 \nonumber \\
	& \leq U_n[t] \left(PN \energybound+ 2 \sqrt{PN \energybound} B_{\bm g} + B_{\bm g}^2\right) \left \lVert   \bm a_n[t]\right \rVert_2 
	\nonumber \\
	& \quad +U_n[t]\sqrt{PN} \energybound
	+  \gamma L \left \lVert \bm a_n[t] \right \rVert_2 +\left \lVert \gamma \hbm r_n[t-1]\right \rVert_2.
\end{align}
Next, an upper bound of $\hbm r_n[t-1]$ is derived. By the definition of $\hbm r_n[t]$ and assumption A\ref{as:boundedprocess}, we have 
\begin{subequations}
	\begin{align}
		\left\lVert \hbm r_n[t-1]  \right\rVert_2 
		& = \left \lVert   \sum_{\tau=P}^{t-1} \,  \gamma^{t-1-\tau} \hat y_n[\tau] \,  \hbm g[\tau] \right \rVert_2 \nonumber  \\
		& \leq  \frac{1}{\gamma}\left \lVert   \sum_{\tau=P}^{t-1} \,  \gamma^{t-\tau} \sqrt{\energybound} \sqrt{\energybound} \bm 1_{NP} \right \rVert_2   \\
		& =   \frac{1}{\gamma}\energybound \sqrt{PN} \gamma^{t} \sum_{\tau=P}^{t-1} \left (\frac{1}{\gamma}\right )^\tau \nonumber  \\
		&=  \frac{1}{\gamma}\energybound \sqrt{PN} \frac{\gamma(1-\gamma^{t-P})}{1-\gamma} \leq  \frac{\sqrt{PN} \energybound}{1-\gamma }. \label{eq:boundonr}	
	\end{align}
\end{subequations}
Using the above bound in \eqref{eq:boundGradJSTIRSO}, it follows that
\begin{subequations} \label{eq:boundGradJSTIRSO2}
	\begin{align}
		&\left \lVert \nabla \tilde \rLoss_t^{(n)} (\tbm a_n[t]) \right \rVert_2 \nonumber \\
		& \leq U_n[t] \left(PN \energybound+ 2 \sqrt{PN \energybound} B_{\bm g} + B_{\bm g}^2\right) \left \lVert   \bm a_n[t]\right \rVert_2 \nonumber \\
		& \quad +U_n[t]\sqrt{PN} \energybound
		+  \gamma L \left \lVert \bm a_n[t] \right \rVert_2 +  \frac{\gamma\sqrt{PN} \energybound}{1-\gamma }\\
		% & \leq \frac{\nu}{1+\nu} \left(PN \energybound+ 2 \sqrt{PN \energybound} B_{\bm g} + B_{\bm g}^2\right) \left \lVert   \bm a_n[t]\right \rVert_2\nonumber \\
		% & \quad +\frac{\nu}{1+\nu} \sqrt{PN}  \energybound
		% +  \gamma L \left \lVert \bm a_n[t] \right \rVert_2 +  \frac{\gamma\sqrt{PN} \energybound}{1-\gamma } \\
		& \leq\frac{\nu}{1+\nu} \left(PN \energybound\!+\! 2 \sqrt{PN \energybound} B_{\bm g} \!+\! B_{\bm g}^2 \!+\! \gamma L\frac{1+\nu}{\nu}\right )\! \left \lVert   \bm a_n[t]\right \rVert_2 
		\nonumber \\
		& \quad +\frac{\nu}{1+\nu} \sqrt{PN} \energybound
		+   \frac{\gamma\sqrt{PN} \energybound}{1-\gamma }. \label{eq:gradboundJSTIRSO}
	\end{align}
\end{subequations} 
The next step is to derive a bound on $ \lVert \bm a_n[t] \rVert_2$. To this end, from \eqref{eq:updateproxonline} and \eqref{eq:gradfJSTIRSO}, it follows that
\begin{align}
	&\left \lVert \bm a_n[t+1] \right  \rVert_2 \nonumber \\
	& \leq \left \lVert \bm a_n^{\text{f}}[t] \right  \rVert_2 \nonumber \\
	& = \left \lVert \bm a_n[t] - \alpha_t \hbm v_n[t]\right  \rVert_2 \nonumber \\
	&=\Big \lVert \bm a_n[t] - \alpha_t \Big ( U_n[t] \hbm g[t] \hbm g^\top [t] \bm a_n[t] - U_n[t] \tilde y_n[t]  \hbm g[t] \nonumber\\
	& \quad + \hbm \gamma \Phi[t-1] \bm a_n[t] - \gamma \hbm r_n[t-1] \Big)\Big  \rVert_2 \nonumber \\
	& = \Big \lVert \left (\bm I - \alpha_t\gamma \hbm \Phi[t-1]- \alpha_t  U_n[t] \hbm g[t] \hbm g^\top [t] \right)\bm a_n[t]  \nonumber \\
	& \quad+  \alpha_t  U_n[t] \tilde y_n[t]  \hbm g[t] +   \alpha_t \gamma\hbm r_n[t-1] \Big  \rVert_2.
\end{align}
Applying triangular inequality and  by assumption A\ref{as:mineig}, we have
\begin{subequations}
\begin{align}
	&\left \lVert \bm a_n[t+1] \right  \rVert_2\nonumber\\	
	& \leq  \lambda_{\mathrm{max}} \left (\bm I - \alpha_t\gamma \hbm \Phi[t-1] - \alpha_t U_n[t] \hbm g[t] \hbm g^\top [t] \right)\left \lVert\bm a_n[t] \right \rVert_2\nonumber \\
	& \quad + \alpha_t\left \lVert U_n[t] \tilde y_n[t]  \hbm g[t]\right  \rVert_2  +   \alpha_t \gamma\left \lVert\hbm r_n[t-1] \right  \rVert_2\\
	& = 1- \alpha_t\gamma\lambda_{\mathrm{min}} \left (\hbm \Phi[t-1] + \alpha_t U_n[t] \hbm g[t] \hbm g^\top [t] \right)\left \lVert\bm a_n[t] \right \rVert_2 \nonumber \\
	& \quad +  \alpha_t\left \lVert U_n[t] \tilde y_n[t]  \hbm g[t]\right  \rVert_2 +   \alpha_t\gamma \left \lVert\hbm r_n[t-1] \right  \rVert_2\\
	& \leq 1- \alpha_t\gamma\lambda_{\mathrm{min}} \left (\hbm \Phi[t-1] \right)\left \lVert\bm a_n[t] \right \rVert_2 +  \alpha_t\left \lVert U_n[t] \tilde y_n[t]  \hbm g[t]\right  \rVert_2 \nonumber \\
	& \quad+   \alpha_t \gamma \left \lVert\hbm r_n[t-1] \right  \rVert_2\\
	& \leq \left( 1- \alpha_t \gamma\strongcvxparf  \right)\left \lVert\bm a_n[t] \right \rVert_2 +  \alpha_t\left \lVert U_n[t] \tilde y_n[t]  \hbm g[t]\right  \rVert_2 \nonumber \\
	& \quad+   \alpha_t \gamma\left \lVert\hbm r_n[t-1] \right  \rVert_2. \label{eq:boundonageneral}
\end{align}
\end{subequations}
Substituting the bound on $\lVert\hbm r_n[t-1]\rVert_2$ from \eqref{eq:boundonr}	 into the above expression, we have
\begin{align}
	&\left \lVert \bm a_n[t+1] \right  \rVert_2 \nonumber \\
	&\leq \left( 1- \alpha_t \strongcvxparf \gamma \right)\left \lVert\bm a_n[t] \right \rVert_2+ \alpha_t\Big ( U_n[t]   \sqrt{PN}\energybound
	+\frac{\sqrt{PN} \energybound}{1-\gamma }\Big).
\end{align}
Setting $\alpha_t=\alpha$ and for $0<\alpha\le 1/L $, it can be proven by recursively substituting into \eqref{eq:boundonageneral} (similar steps to those in the proof of \cite[Theorem 5]{zaman2019online}), that
\begin{align}\label{eq:bound_a}
	\left \lVert \bm a_n[t+1] \right  \rVert_2 & \leq \frac{1}{\strongcvxparf\gamma }\left ( \frac{\nu}{1+\nu}  \sqrt{PN}\energybound+\frac{\sqrt{PN} \energybound}{1-\gamma }\right) \quad \forall t.
\end{align}
%\color{red} This bound depends on t.
Substituting the above bound into \eqref{eq:gradboundJSTIRSO} completes the proof.
\section{Proof of Lemma 3}
\cmt{Error analysis for JSTIRSO}The error in the gradient for JSTIRSO is given by \eqref{eq:erroringrad} and can be rewritten as:
%\begin{subequations}
	\begin{align}
		\bm e^{(n)}[t] %\nonumber \\
		%& = \hat \nabla \tilde \rLoss_t^{(n)}(\bm a_n[t]) - \nabla \tilde \rLoss_t^{(n)}(\bm a_n[t]) \\
		%& =   U_n[t] \left(\hbm g[t] \hbm g^\top [t] \bm a_n[t] -\tilde y_n[t]  \hbm g[t]\right) +\gamma \hbm \Phi[t-1] \bm a_n[t] 
		%\nonumber \\
		%& \quad - \gamma \hbm r_n[t-1] - U_n[t] \left(\bm g[t] \bm g^\top [t] \bm a_n[t] -\tilde y_n[t]  \bm g[t]\right) \nonumber \\
		%&  \quad -\gamma \bm \Phi[t-1] \bm a_n[t] +\gamma \bm r_n[t-1] \\
		&= U_n[t] (\hbm g[t]\hbm g^\top [t]\! -\!\bm g[t]\bm g^\top [t]) \bm a_n[t]  \nonumber \\
		& \quad + U_n[t] \tilde y_n[t]( \bm g[t] -\hbm g[t]) + \gamma (\bm r_n[t-1]-\hbm r_n[t-1]) \nonumber \\
		& \quad + \gamma( \hbm \Phi[t\!-\!1]\!-\! \bm \Phi[t\!-\!1]) \bm a_n[t] .
	\end{align}
%\end{subequations}
%with 	\textcolor{red}{I moved these definitions to the section where they are used first.}
%\textcolor{orange}{
%\begin{align}
%\bm U_{nn}[t] &\define  \bm B_{nn}^2[t] + \frac{\nu^2}{ |\mathcal M_t|^2}\bm C_{nn}^2[t]\\
%\bm W_{nn}[t] & \define \bm C_{nn}[t]\left (   \frac{\nu}{|\mathcal M_t|} \bm B_{nn}[t] - \frac{\nu^2}{ |\mathcal M_t|^2} \bm D_{nn}[t] \right)
%\end{align}
%}
Next, we take the norm on both sides of the above equation
%\begin{subequations}
	\begin{align}
		&\left \lVert \bm e^{(n)}[t]\right \rVert_2 
		\leq 
		\Big \lVert U_n[t] (\hbm g[t]\hbm g^\top [t] -\bm g[t]\bm g^\top [t]) \bm a_n[t]\Big \rVert_2
		\nonumber \\
		%	 &= \Big \lVert\gamma( \hbm \Phi[t-1]- \bm \Phi[t-1]) \bm a_n[t] + U_n[t] (\hbm g[t]\hbm g^\top [t] -\bm g[t]\bm g^\top [t]) \bm a_n[t]  \nonumber \\
		%	 & + \gamma (\bm r_n[t-1]-\hbm r_n[t-1])  + U_n[t]y_n[t]( \bm g[t] -\hbm g[t]) \Big\rVert_2 \nonumber \\
		&  \quad+ \left \lVert\gamma( \hbm \Phi[t-1]-\bm \Phi[t-1]) \bm a_n[t] \right \rVert_2  \nonumber \\
		& \quad  + \left \lVert \gamma (\bm r_n[t-1]-\hbm r_n[t-1]) \right \rVert_2  + \left \lVert  U_n[t]\tilde y_n[t]( \bm g[t] -\hbm g[t]) \right \rVert_2 \nonumber\\
		& \leq \gamma \lambda_{\mathrm{max}} \left ( \hbm \Phi[t \!-\!1]\!-\!  \bm \Phi[t\!-\!1]\right) \left \lVert \bm a_n[t] \right \rVert_2 \!+\!   U_n[t]  \lambda_{\mathrm{max}} \big (  \hbm g[t]\hbm g^\top [t] \nonumber \\
		& \quad -\bm g[t]\bm g^\top [t] \big) \left \lVert \bm a_n[t]\right \rVert_2   + \gamma B_{\bm r} + U_n[t] \left \lvert\tilde y_n[t]\right \rvert \left \lVert   \bm g[t] -  \hbm g[t] \right \rVert_2, \label{eq:bound_e_1}
	\end{align}
%\end{subequations}
where the first inequality holds because of the triangular inequality and the second inequality holds because of the Cauchy-Schwarz inequality. 

Besides, combining A\ref{as:boundedprocess} and \eqref{eq:bg} it can be proven that 
\begin{equation}
	\label{eq:bgg}
	\lambda_{\mathrm{max}} \left (  \hbm g[t]\hbm g^\top [t] \!-\!\bm g[t]\bm g^\top [t] \right) \leq 2 \sqrt{PN\energybound} B_{\bm g}+B_{\bm g}^2.
\end{equation}
By substituting  \eqref{eqs:boundederrors} and \eqref{eq:bgg}  into \eqref{eq:bound_e_1}, we obtain
\begin{subequations}
	\begin{align}
		&\left \lVert \bm e^{(n)}[t]\right \rVert_2 \nonumber \\
		& \leq \gamma B_{\bm\Phi} \left \lVert \bm a_n[t] \right \rVert_2 + U_n[t]\left(2 \sqrt{PN\energybound} B_{\bm g}+B_{\bm g}^2\right) \left \lVert \bm a_n[t] \right \rVert_2 \nonumber \\
		& \quad + \gamma B_{\bm r}+ U_n[t] B_{\bm g} \sqrt{\energybound}\label{eq:errorgradbound0}\\
		& \leq \gamma B_{\bm\Phi} \left \lVert \bm a_n[t] \right \rVert_2 + \left(\frac{\nu}{1+\nu}\right)\left(2 \sqrt{PN\energybound} B_{\bm g}+B_{\bm g}^2\right) \left \lVert \bm a_n[t] \right \rVert_2 \nonumber \\
		& \quad + \gamma B_{\bm r}+ \left(\frac{\nu}{1+\nu}\right) B_{\bm g} \sqrt{\energybound}  \label{eq:errorgradbound11} \\
		&= \left (\gamma B_{\bm\Phi}  + \left(\frac{\nu}{1+\nu}\right)\left(2 \sqrt{PN\energybound} B_{\bm g}+B_{\bm g}^2\right) \right ) \left \lVert \bm a_n[t] \right \rVert_2  \nonumber \\
		& \quad + \gamma B_{\bm r}+ \left(\frac{\nu}{1+\nu}\right) B_{\bm g} \sqrt{\energybound},  \label{eq:errorgradbound1}
	\end{align}
\end{subequations}
where the final result comes from substituting an upper bound on $U_n[t]$ and rearranging terms.
%\begin{align}
%	\lambda_{\mathrm{max}} \left ( \hbm \Phi[t-1]- \gamma \bm \Phi[t-1]\right) &\leq B_{\bm\Phi} \quad  \forall t\\
%	\lambda_{\mathrm{max}} \left ( \bm U_{nn}[t] \hbm g[t]\hbm g^\top [t] -\bm g[t]\bm g^\top [t] \right ) &\leq B_{\bm g} \quad \forall t \\
%	\left \lVert \gamma \bm r_n[t-1]-\hbm r_n[t-1] \right \rVert_2 & \leq B_{\bm r} \quad \forall t 
%\end{align}
We can use here the same bound on $ \lVert \bm a_n[t] \rVert_2$ that was derived in the proof of Lemma \ref{lemma:boundedgrad} [cf. \eqref{eq:bound_a}]:
\begin{align}
	\left \lVert \bm a_n[t+1] \right  \rVert_2 & \leq \frac{\sqrt{PN} \energybound}{\strongcvxparf }\left (\frac{\nu}{1+\nu}+\frac{1}{1-\gamma }\right) \quad \forall ~t;
\end{align}
substituting the above bound into
\eqref{eq:errorgradbound1} completes the proof.
\newpage
\renewcommand{\thefigure}{6}
	\begin{figure*}
		\centering
		\includegraphics[width=\textwidth,height=24cm]{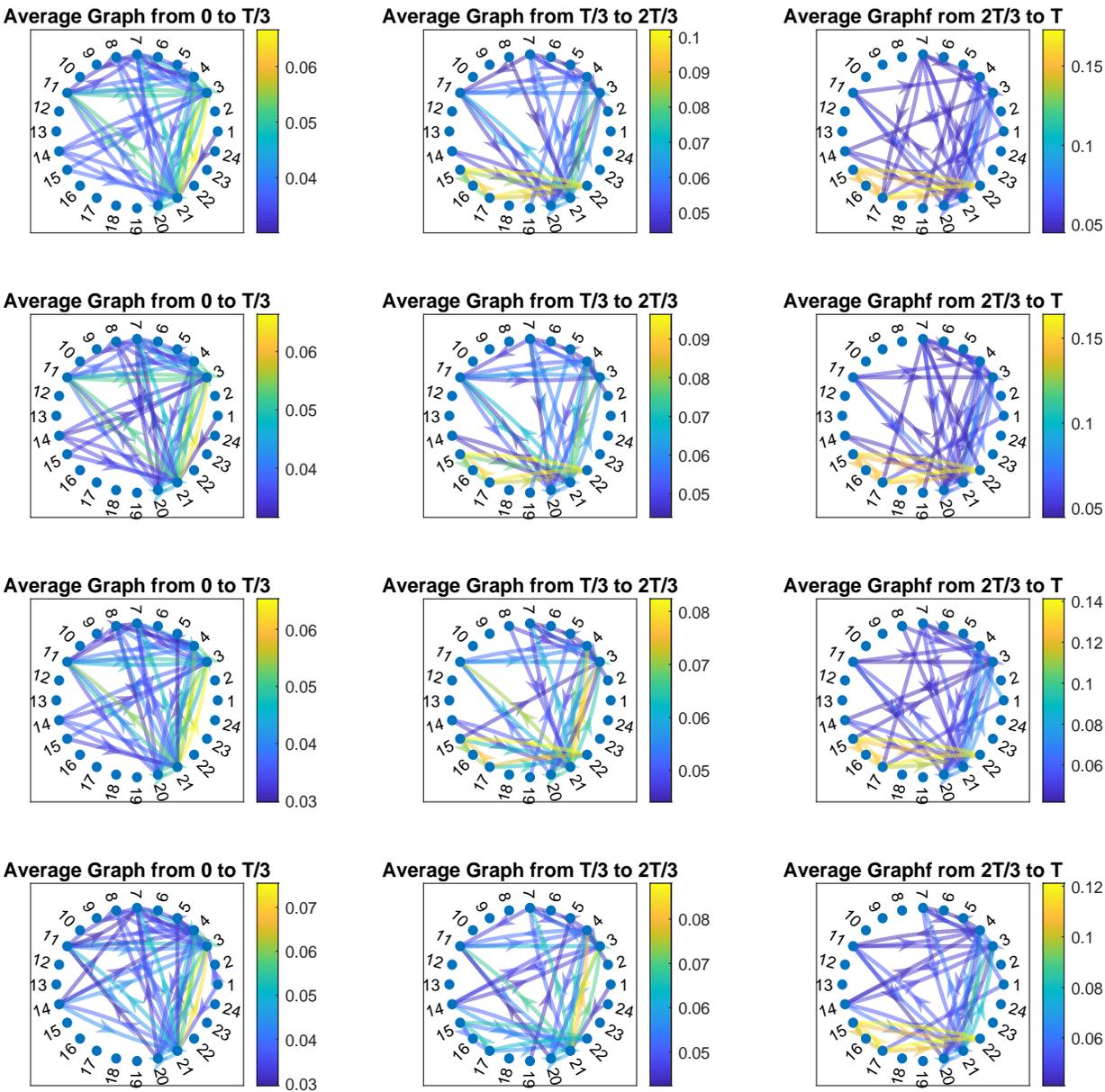}
		\caption{Average estimated graphs via JSTIRSO. The rows of the sub-figures correspond to missing probabilities 0, 0.05, 0.15, and 0.2. respectively. All the hyperparameters of the algorithm are computed via grid search.  Simulation parameters: $N=24, P=6, T=360, \sigma_u=0.01, \sigma_\epsilon=0.01, \gamma=0.9$.}
		\label{fig:real-data-avgGraphs}
	\end{figure*}
\end{document}